\newcommand{\B}{\bm{\mathcal{B}}}
\newcommand{\R}{\bm{\mathbb{R}}}
\newcommand{\bxi}{\bm{\mathcal{\xi}}}
\newcommand{\btheta}{{\bm{\mathcal{\theta}}}}
\newcommand{\balpha}{{\bm{\mathcal{\alpha}}}}
\newcommand{\z}{\bm{z}}
\newcommand{\F}{\bm{F}}
\newcommand{\accept}[1]{\color{black}{#1}}
\newtheorem{remark}{Remark}
\newtheorem{proposition}{Proposition}
\newtheorem{assumption}{Assumption}
\newtheorem{lemma}{Lemma}
\newtheorem{theorem}{Theorem}[section]
\newtheorem{example}{Example}
\newtheorem{definition}{Definition}
\author{{Mengge Li} \footnote{{Department of Mathematics, NUS, li.mengge@u.nus.edu.} The work of the first author is partially supported by the Ministry of Education in Singapore under
		the grant MOE AcRF A-8000453-00-00.} \and Shuaijie Qian \footnote{CMSA, Harvard, shuaijie$\_$qian@fas.harvard.edu } \and Chao Zhou \footnote{{Department of Mathematics, NUS, matzc@nus.edu.sg.} The work of the third author is partially supported by the Ministry of Education in Singapore under
		the grant MOE AcRF A-8000453-00-00, A-0004273-00-00, A-0004589-00-00 and by NSFC under the grant award 11871364.}}
\title{Robust Equilibrium Strategy for Mean-Variance Portfolio Selection}
\begin{document}

	\maketitle
	\section*{Abstract}
	The classical mean-variance portfolio selection problem induces time-inconsistent {(\it{precommited})} strategies (see Zhou and Li (2000)). To {overcome} this time-inconsistency, Basak and Chabakauri (2010) introduce the game theoretical approach and look for {(sub-game perfect Nash)} \textit{equilibrium strategies}, {which is} solved from the corresponding partial differential equations (PDE) system. 
{In their model, the investor perfectly knows the drift and volatility of the assets.   
However, in reality investors only have an estimate on them, e.g, a 95\% confidence interval. In this case, some literature (e.g., Pham, Wei and Zhou (2022)) derives the optimal precommited strategy under the worst parameters, which is the \it{robust control}. }
{The relation between the equilibrium strategy and the PDE system has not been justified when incorporating robust control.
 } 
	In this paper, we consider a general dynamic mean-variance  framework and propose a novel definition of the \textit{robust equilibrium strategy}. Under our definition, a classical solution to the corresponding PDE system implies a robust equilibrium strategy. We then explicitly solve for some special examples.
	\section{Introduction}  
	
	Markowitz (1952) pioneers in the mean-variance portfolio selection problem, where the explicit solution to a single-period problem is proposed. \textcolor{black}{Later Zhou and Li (2000) investigate a dynamic continuous-time mean-variance portfolio selection problem in the spirit of Markowitz's  work.} 
However, there are two drawbacks in  this model. 

	First, the devised strategy is a \textit{precommited strategy}, that is, the investor finds the optimal strategy at initial time and sticks to it until maturity. However, this strategy may be suboptimal in the future which results in the time-inconsistency feature. A game theoretic approach to handle this problem is first proposed in Basak and Chabakauri (2010), where they derive a time-consistent explicit optimal strategy. However, in this strategy, the amount of wealth invested in stock is independent of the total wealth, which contradicts the common knowledge.  In view of this shortcoming, \textcolor{black}{Björk and Murgoci (2010) seek for the Nash sub-game prefect equilibrium for the time inconsistent problems with the Markov process. The corresponding strategy they {derive} is known as \textit{equilibrium startegy}. }Bj\"{o}rk, Murgoci, and Zhou (2014) extend the model to cases with the wealth-dependent risk aversion level. \textcolor{black}{ As an extension, Hernández and Possamaï (2021) develop the sub-game perfect Nash equilibrium strategy for non-Markovian time-inconsistent stochastic control {problems} and introduce the corresponding BSDE system}.  
	
	Second, Markowitz (1952) assumes that the investor perfectly knows the parameters of the market. However, this assumption is too strong for practical applications. Therefore, researchers incorporate the robustness into the model.  	
\textcolor{black}{In a static setting, Garlappi, Uppal and Wang (2007) take into account the drift uncertainty, while Liu and Zeng (2017) concern the uncertainty of the correlation matrix.} Pham, Wei, and Zhou (2022) \textcolor{black}{focus on the dynamic setting, and take into account} the uncertainty on the drift and correlation of multiple stocks. 
	
Zeng, Li, and Gu (2016), Pun (2018), and Yan et al. (2020) \textcolor{black}{investigate the dynamic equilibrium strategy with robustness concern}. However, Zeng, Li, and Gu (2016) ignore the theoretical foundation of this problem. More precisely, they do not show that the solution to the corresponding PDE must be {an equilibrium} strategy. \textcolor{black}{Pun (2018) and Yan et al. (2020) circumvent this obstacle via assuming that the ``nature" knows all possible decisions of the investor and select the worst-case scenario with time-consistent manner.}
		\begin{table}
		\begin{center}
			
			\begin{tabularx}{15cm}{X<{\centering} p{2cm}<{\centering} X<{\centering} X<{\centering}}
				\hline
			& Strategy  & Uncertainty & Issues\\ \hline
\hline
Pham, Wei and Zhou (2022)& Precommitted & Drift and correlation in product and ellipsoidal set& Time-inconsistency and no jump\\ \hline
Zeng, Li and Gu (2016) & Equilibrium & Drift and jumps intensity& Omit the proof of verification theorem \\\hline
Yan, Han, Pun, and Wong (2020)& Equilibrium & Drift& Unclear on why the worst-case scenario is time-consistent in the set of all equivalent probability measures\\\hline
Ours & Equilibrium &Drift, variance and jumps in product set&  $/$\\
				
				\hline
			\end{tabularx}
		\end{center}
	\end{table}

	In terms of how to model the uncertainty, there are two strands of literature. The first strand introduces the entropy penalty into utility maximization. \textcolor{black}{In this way, there is no constraint on the candidate market condition}.  Among this strand of literature, Maenhout (2004) concerns the robustness on  the stock return. Later Branger and Larsen (2013) take into account the uncertainty about jump and diffusion. Flor and Larsen (2014) focus on an investor uncertain about the drift of bonds and stocks. Jin, Luo and Zeng  (2020) investigate the uncertainty of jumps.  
	Another strand focuses on the portfolio selection under the worst scenario (see, e.g., Jin and Zhou (2015); Lin and Riedel(2014); Fouque, Pun, and Wong(2016)). \textcolor{black}{In this strand, the candidate market condition is constrained on a subset of all possible market conditions.} Among this strand, our work is most closely related to Pham, Wei, and Zhou (2022), but they seek a precommited strategy while we are interested in dynamic equilibrium strategies.

\subsection{Contribution}	
	In this paper, we  concern a general dynamic mean-variance problem. The generality comes from two parts. First, we cover general wealth processes. Second,  
	we take into account general mean-variance criterions, which covers the classical mean variance criteria for terminal wealth or for portfolio log returns (see Dai et al. (2020)), and the risk aversion coefficient can depend on wealth.  
	
		We propose a novel definition of robust equilibrium strategy.
		Basak and Chabakauri (2010) introduce the game theoretical approach which freezes the strategy in the future and looks for the sub-game perfect equilibrium. In this way, the derived strategy is time-consistent.  
	We extend their definition to incorporate robust control, and show that a classical solution to the corresponding PDE system still implies a robust equilibrium. 
	
	\textcolor{black}{Pun (2018) and Yan et al. (2020) also propose a definition of robust equilibrium strategy. 
	Essentially, they are looking for an equilibrium of a game between the environment and the investor. However, it is not natural to introduce the environment as a player, since the environment is exogenous. 
	In comparison, our definition emphasizes an investor looking for a sub-game perfect equilibrium under some worst condition s/he concerns.}
	In particular, the investor chooses the set of the possible models in a time-consistent manner, and s/he believes the ``nature" is the worst case from this set. An investor in a time-consistent manner means this investor only considers market conditions which does not depend on the starting time of the investment.
	\textcolor{black}{Moreover, we also extend to the case with jumps in stock price {dynamics}.} For some special models {arising} from portfolio selection problems, we can explicitly solve the corresponding PDEs, {and} find that the worst-case scenario is independent of \textcolor{black}{time, wealth level and risk aversion coefficients} in these models.
	
	 	
     
 The rest of the paper is organized as follows. Section 2 is devoted to the setup of a basic model to illustrate our novel definition of robust equilibrium strategies. Moreover, we also formally derive the corresponding PDE system.  
 Section 3 focuses on solving some special examples of the basic model in Section 2 to build some intuition on the optimal strategy. The implication for the optimal robust portfolio strategy is also presented there. In Section 4, we give a general model and rigorously show that a solution to the PDE system is a robust equilibrium strategy in our definition. Section 5 is the summary of this paper.  Some technical proofs and calculations are relegated to the appendix.  

\section{Basic Model {and Robust Equilibrium Strategy}}\label{sec model}
In this section, we assume the investor's self-financing wealth process satisfies the following general framework 
\begin{align}\label{equ self-fin process}
	d X^{\balpha, \btheta}_s = \eta(\balpha_s, \btheta_s) ds +  \bxi(\balpha_s, \btheta_s) \cdot d\B_s, \quad {t\leq s \leq T, \qquad X^{\balpha, \btheta}_t = x},
\end{align}
where {$\{\B_s\}_{t\leq s \leq T}$} is an $n$-dimensional standard Brownian motion {on a filtered probability space ($\Omega$, $\mathscr{F}$, $\{\mathscr{F}_s\}_{s \in[t,T ]}$, $P$)}.
We use $\btheta_s \in \Theta$ to represent the market {scenario}, which is {exogenous} and unknown to the investor. 
%
$\eta: \mathbb{R}^{n} \times  \Theta  \to \mathbb{R}$ and $\bxi:  \mathbb{R}^{n} \times  \Theta  \to \mathbb{R}^n$ are two functions representing the drift term and diffusion term, respectively. 
{$\balpha_s$ is the investor's control. The admissible control set is}
\begin{align}
	\mathscr{A}_t =& \{\balpha_s\in \R^{n}, t{\leq} s \leq T \big| \balpha_s\ \text{is adapted to}\ \mathscr{F}_s, \notag \\
	& \qquad \text{and}\ E\big[\int_t^T \|\bxi(\balpha_s, \btheta_s)\|_2^2ds \big]< +\infty, {E\big[\int_t^T |\eta(\balpha_s, \btheta_s)|ds \big]< +\infty}, \quad {\forall  \{\btheta_s\}_{t\leq s \leq T}  \in \Theta_{[t, T]}}  \}. \notag
\end{align}
where 
\begin{align}
\Theta_{[t, T]}: = \{\btheta_s, t\leq s \leq T| \btheta_s \in \Theta\}. \notag
\end{align} 
For initial state $X^{\balpha, \btheta}_t = x \in \mathbb{R}$, we define a functional, {which is related to the target of the investor}, of the {following} form 
	\begin{align}\label{equ func J}
		J(t, x; \balpha, \btheta) = {E}_t[X^{\balpha, \btheta}_{T}]-\lambda Var_t(X^{\balpha, \btheta}_{T}) = {E}_t[X^{\balpha, \btheta}_{T}-\lambda (X^{\balpha, \btheta}_{T})^2] +  \lambda {E}_t^2[X^{\balpha, \btheta}_T]. 
	\end{align} 
	\begin{example}\label{exam wealth}
		A special case {related to} our general model {\eqref{equ self-fin process}-\eqref{equ func J}} is the mean-variance problem for terminal wealth {(see, e.g., Markowitz (1952) and Bjork, Murgoci, and Zhou (2014))}, where $X_t$ is the wealth process and $\balpha$ is the amount of money in {stocks}. In  this setting, there are $n$ stocks and the stock prices $\bm{\mathcal{S}}_t : = (\bm{\mathcal{S}}_{1t},\ldots,\bm{\mathcal{S}}_{nt})^\top$ evolve as 
		\begin{align}\label{equ stock price}
			\frac{d \bm{\mathcal{S}}_{it}}{\bm{\mathcal{S}}_{it}} = \bm{b}_{it} dt + \sum_{j=1}^n \bm{\sigma}_{ijt} d \B_{jt},
		\end{align}
	    {and the scenario is 
	    \begin{align}\label{equ ambiuity}	
	    	\btheta_t = (\bm{b}_t, \bm{\Sigma}_t),
    	\end{align}
   	 	where $\bm{b}_t=(\bm{b}_{1t}, \bm{b}_{2t}, ..., \bm{b}_{nt})^\top$, $\bm{\Sigma}_t = \bm{\sigma}^\top_t\bm{\sigma}_t$, } $\bm{\sigma}_t = (\bm{\sigma}_{ijt})_{n\times n}$ \textcolor{black}{and $\B_{t}=(\B_{1t}, \B_{2t},..., \B_{nt})^{\top}$}. The wealth process is
		\begin{align}
			d X_t^{\balpha, \btheta} = \balpha_t^\top \bm{b}_t dt  + \balpha_t^\top \bm{\sigma}_t d \B_{t}.\notag
		\end{align}
		In this special case, $\eta(\balpha_t, \btheta_t) = \balpha_t^\top \bm{b}_t$ and $\bm{\xi}(\balpha_t, \btheta_t) = (\balpha_t^\top \bm{\sigma}_t)^\top$.
	\end{example}
	
	\begin{example}\label{exam log wealth}
		Dai et al. (2020) consider the mean-variance problem for log returns. In this case, the stock price {and the scenario} still follows  \eqref{equ stock price} {and \eqref{equ ambiuity}, respectively}. By denoting $\balpha_t$ as the proportion of wealth in stock,  the concerned $\log$ wealth process satisfies
		\begin{align}
			d X_t^{\balpha, \btheta} = (\balpha_t^\top \bm{b}_t - \frac{1}{2} \balpha_t^\top \bm{\Sigma}_t \balpha_t) dt  + \balpha_t^\top \bm{\sigma}_t d \B_t.\notag
		\end{align}
		Then ${\eta}(\balpha_t, \btheta_t) =\balpha_t^\top \bm{b}_t - \frac{1}{2} \balpha_t^\top \bm{\Sigma}_t \balpha_t$ and $\bm{\xi}(\balpha_t, \btheta_t) = (\balpha_t^\top \bm{\sigma}_t)^\top$, where $\bm{\Sigma}_t :=\bm{\sigma}_t (\bm{\sigma}_t)^\top$.
	\end{example}
	
	Following Bjork, Murgoci and Zhou (2014), we only consider feedback controls, i.e., the control $\alpha$ is a deterministic function of the state variable $(t, x)$. To make our notations simple, we still denote this function as $\balpha(t, x)$. {Similarly, we take $\btheta$ as a deterministic function $\btheta(t, x)$, too.}     
	
{In the follows, we define the robust equilibrium strategy.}	
\begin{definition}((Time-consistent) worst-case scenario)
	
	For a given strategy $\balpha(s, y)$, $(s, y)\in [t, T]\times \mathbb{R}$,  we say $\btheta^{\balpha}(s, y)$ to be a worst-case scenario {(for $\balpha$)} if 
	\begin{align}
		\limsup \limits_{h \to 0} \frac{J(t, x; \balpha, \btheta^{\balpha}) - J(t, x; \balpha, \btheta^{\balpha}_{h, \bm{u}})}{h} \leq 0, \notag
	\end{align} 
	for any $(t, x)$ and $\btheta^{\balpha}_{h, \bm{u}}$ defined as 
	\begin{numcases}{\btheta^\balpha_{h, \bm{u}}(s, y) = }
		\bm{u} & $t\leq s < t+h$\notag\\
		\btheta^\balpha(s, y) & $t+h\leq s \leq T$,\label{equ defi theta}
	\end{numcases}  
	where $\bm{u}\in \Theta$ is a constant. 
\end{definition}
\color{black}{In our framework, the investor {chooses market condition} 
in a time-consistent way. Therefore, when given a strategy $\alpha$, the investor looks for the worst case $\btheta^\alpha$, such that any local variation of $\btheta^\alpha$ {at time $t$} will make $\balpha$ better.}
{In the follows, we always use $\btheta^{\balpha}$ to denote the worst-case scenario for $\alpha$.}	

\begin{definition}(Equilibrium)

	Given a strategy $\balpha(s,y)$ with $(s,y) \in [t,T] \times \mathbb{R}$, we construct a strategy ${\balpha_{h, w}}$ by 
	\begin{numcases}{{\balpha_{h, w}}(s, y) = }
		\bm{w}, & $t\leq s < t+h$\notag\\
		\balpha(s, y), & $t+h\leq s \leq T$,\notag
	\end{numcases}  
	where $\bm{w}\in \R^{n}$ is a constant, $h>0$, and $(t, x)$ is arbitrarily chosen. 
	We say  $\balpha$ is a robust equilibrium strategy if 
	\begin{align}
		\liminf \limits_{h\to 0} \frac{ J(t, x; \balpha, \btheta^\balpha) -  \inf \limits_{\bm{u}\in \Theta}J(t, x; {\balpha_{h, w}}, \btheta^{\balpha_{h,w}}_{h,\bm{u}})}{h} \geq 0	\notag
	\end{align}
	for any $\bm{w}$ and $(t, x)$, {where $\btheta^{\balpha_{h,w}}_{h,\bm{u}}$ is the same as \eqref{equ defi theta} }.  
\end{definition}	
\color{black}{In this definition, the investor optimizes $\balpha$ such that it performs better than any of its local variation under the respective worst-case scenarios.

\color{black}{Our definition of robust equilibrium strategy is inspired of Basak and Chabakauri (2010) and Bj\"{o}rk, Murgoci, and Zhou (2014), in which they assume the market parameters are perfectly known by the investor. Their problems can be regarded as a special case of ours by setting $\btheta$ as a set with only one element. Later Pun (2018) introduces model uncertainty into this problem, and propose a definition of robust equilibrium, which emphasizes the equilibrium between the nature and investor. In comparison, our definition here focuses on the investor's own target of finding a sub-game perfect equilibrium in a game-theoretical setting, since we believe the nature has no intention to play against an investor.}

\subsection{PDE Approach}
	Denote $\hat{\balpha}$ as a robust equilibrium strategy, and \textcolor{black}{assume} ${\btheta^{\hat{\balpha}}}$ is the corresponding worst market condition.
	Then $V(t, x) := J(t, x; \hat{\balpha}, {\btheta^{\hat{\balpha}}})$ 
	and $g(t, x) = {E}[X^{\hat{\balpha}, {\btheta^{\hat{\balpha}}}}_{T}|X^{\hat{\balpha}, {\btheta^{\hat{\balpha}}}}_{t}=x]$ formally satisfy the following PDE system.
	\begin{equation}\label{equ ori PDE}
		\left\{\begin{aligned}
			& \sup_{\balpha\in {\R^{n}}}\inf_{\btheta {\in \Theta}}\{\mathcal{L}^{\balpha, \btheta}V(t,x)-\lambda \mathcal{H}^{\balpha, \btheta} g(t, x)\} \\
			&\qquad \qquad = \mathcal{L}^{\hat{\balpha}, {\btheta^{\hat{\balpha}}}}V(t,x)-\lambda \mathcal{H}^{\hat{\balpha}, {\btheta^{\hat{\balpha}}}} g(t, x)=0\\
			&\mathcal{L}^{\hat{\balpha}, {\btheta^{\hat{\balpha}}}}g(t,x)=0\\
			&V(T,x)=x\\
			&g(T,x)=x,
		\end{aligned}\right.
	\end{equation}
	where 
	\begin{align}\label{equ opera A}
		&\mathcal{L}^{\balpha,\btheta}\psi(t,x)=\psi_{t}(t,x)+\eta(\balpha, \btheta) \psi_{x}(t, x)+\frac{1}{2}\|\bm{\xi}\|_2^2(\balpha, \btheta) \psi_{xx}(t,x)\\
		& \mathcal{H}^{\balpha,\btheta}\psi(t,x) = \|\bm{\xi}\|_2^2(\balpha, \btheta) \psi_{x}^{2}(t, x).
	\end{align}
	Following Bjork, Murgoci and Zhou (2014), we can also set $$f(t,x)={E}_t[X_{T}^{\hat{\balpha}, {\btheta^{\hat{\balpha}}}}-\lambda (X_{T}^{\hat{\balpha}, {\btheta^{\hat{\balpha}}}})^{2}|X^{\hat{\balpha},{\btheta^{\hat{\balpha}}}}_{t}=x],$$ then we have 
	\begin{align}
		V(t,x)=f(t,x)+\lambda g^{2}(t,x). \notag
	\end{align}
	The corresponding PDE system is
	\begin{equation}\label{equ2 ori PDE}
		\left\{\begin{aligned}
			&\sup_{\balpha\in \R^{n}}\inf_{\btheta {\in \Theta}}\{ \mathcal{L}^{\balpha,\btheta}f(t,x)+2\lambda g(t,x) \mathcal{L}^{\balpha,\btheta}g(t,x)\}\\
			&\qquad \qquad = \mathcal{L}^{\hat{\balpha},{\btheta^{\hat{\balpha}}}}f(t,x)+2\lambda g(t,x) \mathcal{L}^{\hat{\balpha},{\btheta^{\hat{\balpha}}}}g(t,x) = 0\\
			&\mathcal{L}^{\hat{\balpha}, {\btheta^{\hat{\balpha}}}}f(t,x)=0\\
			&\mathcal{L}^{\hat{\balpha}, {\btheta^{\hat{\balpha}}}}g(t,x)=0\\
			&f(T,x)=x-\lambda x^{2}\\
			&g(T,x)=x.
		\end{aligned} \right.
	\end{equation}
	
	\begin{proposition}
		We have the following equivalence between the two PDE systems \eqref{equ ori PDE} and \eqref{equ2 ori PDE}.
		
		1. Assume {$(V, g)$} is a pair of solutions to the  PDE system \eqref{equ ori PDE}, and $V$ and $g$ are $C^{1,2}$ in $(t, x)$, then {$(f, g)$} is a pair of solutions to the  PDE system \eqref{equ2 ori PDE} with 
		$$f(t, x): = V(t, x) - \lambda g^2(t, x).$$
		
		2. Assume {$(f, g)$} is a pair of solutions to the  PDE system \eqref{equ2 ori PDE}, and $f$ and $g$ are $C^{1,2}$ in $(t, x)$, then {$(V, g)$} is a pair of solutions to the  PDE system \eqref{equ ori PDE} with 
		$$V(t, x) := f(t, x)+ \lambda g^2(t, x).$$
	\end{proposition}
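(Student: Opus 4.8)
The plan is to reduce the entire equivalence to a single algebraic identity describing how $\mathcal{L}^{\balpha,\btheta}$ acts on $g^2$ in terms of how it acts on $g$. Since both directions tie $V$ and $f$ together through $V = f + \lambda g^2$, everything should follow once that identity is in hand.

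First I would establish, for any $C^{1,2}$ function $g$ and any fixed $(\balpha,\btheta)$, the identity
\begin{align}
\mathcal{L}^{\balpha,\btheta}(g^2) = 2g\,\mathcal{L}^{\balpha,\btheta}g + \mathcal{H}^{\balpha,\btheta}g. \notag
\end{align}
This is a direct product-rule computation: expanding $(g^2)_t = 2gg_t$, $(g^2)_x = 2gg_x$, and $(g^2)_{xx} = 2g_x^2 + 2gg_{xx}$ in the definition of $\mathcal{L}^{\balpha,\btheta}$ from \eqref{equ opera A} and collecting the cross term $\|\bm{\xi}\|_2^2 g_x^2 = \mathcal{H}^{\balpha,\btheta}g$. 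The $C^{1,2}$ regularity of $g$ is exactly what legitimizes this manipulation.

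The key consequence is that under $V = f + \lambda g^2$, for every $(\balpha,\btheta)$ the two integrands coincide pointwise,
\begin{align}
\mathcal{L}^{\balpha,\btheta}V - \lambda\mathcal{H}^{\balpha,\btheta}g = \mathcal{L}^{\balpha,\btheta}f + 2\lambda g\,\mathcal{L}^{\balpha,\btheta}g, \notag
\end{align}
because the two $\lambda\mathcal{H}^{\balpha,\btheta}g$ contributions cancel. Since this holds for each $(\balpha,\btheta)$ with $g,V,f$ fixed functions of $(t,x)$, the inner $\inf_{\btheta}$ and outer $\sup_{\balpha}$ in \eqref{equ ori PDE} and \eqref{equ2 ori PDE} are applied to literally the same object; hence the same maximizer–minimizer pair $(\hat{\balpha},\btheta^{\hat{\balpha}})$ is optimal for both, and the first (sup-inf) equation of one system holds if and only if it holds for the other, with value zero in both cases.

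It then remains only to match the auxiliary equations and terminal data. In direction 1, from \eqref{equ ori PDE} I already have $\mathcal{L}^{\hat{\balpha},\btheta^{\hat{\balpha}}}g=0$; substituting this into the identity evaluated at $(\hat{\balpha},\btheta^{\hat{\balpha}})$ collapses $\mathcal{L}^{\hat{\balpha},\btheta^{\hat{\balpha}}}V - \lambda\mathcal{H}^{\hat{\balpha},\btheta^{\hat{\balpha}}}g$ down to $\mathcal{L}^{\hat{\balpha},\btheta^{\hat{\balpha}}}f$, so the vanishing of the former yields $\mathcal{L}^{\hat{\balpha},\btheta^{\hat{\balpha}}}f=0$, producing both extra equations of \eqref{equ2 ori PDE}. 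In direction 2 I carry both $\mathcal{L}^{\hat{\balpha},\btheta^{\hat{\balpha}}}f=0$ and $\mathcal{L}^{\hat{\balpha},\btheta^{\hat{\balpha}}}g=0$ from \eqref{equ2 ori PDE} and run the identity forward to recover $\mathcal{L}^{\hat{\balpha},\btheta^{\hat{\balpha}}}V-\lambda\mathcal{H}^{\hat{\balpha},\btheta^{\hat{\balpha}}}g=0$. The terminal conditions transform as $f(T,x)=V(T,x)-\lambda g^2(T,x)=x-\lambda x^2$ and conversely $V(T,x)=f(T,x)+\lambda g^2(T,x)=x$, while $g(T,x)=x$ is untouched. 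I do not expect a genuine obstacle; the only point requiring a little care is the observation that, because the two integrands agree pointwise in $(\balpha,\btheta)$, the sup-inf values \emph{and} their optimizers transfer verbatim between the systems, rather than merely being related by an inequality.
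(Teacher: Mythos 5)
Your proof is correct: the product-rule identity $\mathcal{L}^{\balpha,\btheta}(g^2)=2g\,\mathcal{L}^{\balpha,\btheta}g+\mathcal{H}^{\balpha,\btheta}g$ makes the sup-inf integrands of \eqref{equ ori PDE} and \eqref{equ2 ori PDE} coincide pointwise in $(\balpha,\btheta)$, after which the auxiliary equations and terminal data transfer exactly as you describe. The paper itself omits the proof as ``straightforward,'' and your argument is precisely the standard computation the authors have in mind, so there is nothing to contrast.
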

{The proof of this proposition is straightforward, we omit here.}
	
	\section{Explicit solution to Special Examples}\label{sec expli solu}

{In this section, we solve the PDE systems {associated with}
	the two examples in  Section \ref{sec model}.   
{All calculations are relegated to Appendix.}
We inherit the assumption ($H\Theta$) from Pham, Wei and Zhou (2022) on the set $\Theta$. For readers' convenience, we {give} the assumption below.  }
\begin{assumption}\label{assump zhou mf}

{We assume the set $\Theta= \{(\bm{b}, \bm{\Sigma}) \}$ is of the following form}:\par
$\Theta = \Delta \times \Gamma$, where $\Delta$ is a compact set in $\mathbb{R}^n$, and $\Gamma$ is a convex subset of $\mathbb{S}_{>+}^{n}$, the set of all positive definite symmetric $n\times n$ matrices. \footnote{\color{black}{Pham, Wei and Zhou (2022) assume the {marginal} volatilities are perfectly known, while the correlation matrix is uncertain. Apart from this ``product set" case, they also introduce the ``ellipsoidal set" case, which can also be handled by this paper's approach  if {marginal} volatilities are known.} }  \\	
\end{assumption}
\subsection{Special Examples}
{As illustrated in Pham, Wei and Zhou (2022), Assumption \ref{assump zhou mf} implies ambiguity in drift is independent of the ambiguity of the \textcolor{black}{correlation} matrix.
	
	For the mean-variance problem of terminal wealth in Example  \ref{exam wealth}, we have the following solution {to} PDE system \eqref{equ ori PDE},
	\begin{numcases}{}
		V(t,x)=x+\frac{1}{4\lambda}\hat{\bm{b}}^{\top}\hat{\bm{\Sigma}}^{{-1}}\hat{\bm{b}}(T-t)\notag\\
		g(t,x)=x+\frac{1}{2\lambda}\hat{\bm{b}}^{{\top}}\hat{\bm{\Sigma}}^{{-1}}\hat{\bm{b}}(T-t)\notag\\
		f(t,x)=(1-\hat{\bm{b}}^{{\top}}\hat{\bm{\Sigma}}^{{-1}}\hat{\bm{b}}(T-t))(x+\frac{1}{4\lambda}\hat{\bm{b}}^{{\top}}\hat{\bm{\Sigma}}^{{-1}}\hat{\bm{b}}(T-t))-\lambda x^{2}\notag\\
		\hat{\balpha}=\frac{1}{2\lambda}\hat{\bm{\Sigma}}^{{-1}}\hat{\bm{b}},\notag
	\end{numcases}
	where $\btheta^{\hat{\balpha}}=(\hat{\bm{b}},\hat{\bm{\Sigma}})=\arg \min \limits_{\btheta \in \Theta} \bm{b}^{\top} \bm{\Sigma}^{-1} \bm{b}$.
	
	For the mean-variance problem of terminal $\log$ return in Example  \ref{exam log wealth}, we have the following solution {to} PDE system \eqref{equ ori PDE}
	\begin{numcases}{}
		V(t,x)=x+\frac{1}{2(1+2\lambda)}\hat{\bm{b}}^{{\top}}\hat{\bm{\Sigma}}^{{-1}}\hat{\bm{b}}(T-t)\notag\\
		g(t,x)=x+\frac{2\lambda+\frac{1}{2}}{(1+2\lambda)^{2}}\hat{\bm{b}}^{{\top}}\hat{\bm{\Sigma}}^{{-1}}\hat{\bm{b}}(T-t)\notag\\
		f(t,x)=(1-\frac{4\lambda^{2}+\lambda}{(1+2\lambda)^{2}}\hat{\bm{b}}^{{\top}}\hat{\bm{\Sigma}}^{{-1}}\hat{\bm{b}}(T-t))x-\lambda x^{2}\notag\\
		\qquad +\frac{1}{1+2\lambda}\hat{\bm{b}}^{{\top}}\hat{\bm{\Sigma}}^{{-1}}\hat{\bm{b}}(T-t)[\frac{1}{2}-\frac{\lambda (2\lambda+\frac{1}{2})^{2}}{(1+2\lambda)^{3}}\hat{\bm{b}}^{{\top}}\hat{\bm{\Sigma}}^{{-1}}\hat{\bm{b}}(T-t)]\notag\\
		\hat{\balpha}=\frac{1}{1+2\lambda}\hat{\bm{\Sigma}}^{{-1}}\hat{\bm{b}}\notag
	\end{numcases}
	where $\btheta^{\hat{\balpha}}=(\hat{\bm{b}},\hat{\bm{\Sigma}})=\arg \min \limits_{\btheta\in \Theta} \bm{b}^{\top}{\bm{\Sigma}}^{-1}{\bm{b}}$.

	\subsection{Worst case scenario}
	Let $\Theta=\{(b,\Sigma)\}$ be a product set with the product constraint:
	\begin{align}
		&b_1\in [\underline{b}_1,\overline{b}_1],\quad b_2\in [\underline{b}_2,\overline{b}_2],\quad \sigma_1\in [\underline{\sigma}_1,\overline{\sigma}_1],\quad\sigma_2 \in [\underline{\sigma}_2,\overline{\sigma}_2],\quad -1<\underline{\rho}\leq \rho\leq \overline{\rho}<1.
	\end{align}	
	Notice that in this case, $\Sigma=\begin{pmatrix}
		\sigma^{2}_{1} & \sigma_{1}\sigma_{2}\rho\\ 
		\sigma_{1}\sigma_{2}\rho & \sigma^{2}_{2}
	\end{pmatrix}$, which is positive definite symmetric obviously. {We also} have $\Gamma$ is a convex set.\\
	In the product constraint case, the worst case is selected via the following criterion: 
	\begin{align}\label{equ V}
		(\hat{b}_1,\hat{b}_2,\hat{\sigma}_1,\hat{\sigma}_2,\hat{\rho}):=\arg \min \limits_{b_1, b_2, \sigma_1,\sigma_2,\rho} \bm{b}^{\top} \bm{\Sigma}^{-1} \bm{b} = \frac{1}{\sigma_1^{2}\sigma_2^{2}(1-\rho^2)}\bigg(\sigma_2^{2}b_1^2+\sigma_1^{2}b_2^2 -2 \sigma_1\sigma_2\rho b_1 b_2\bigg).
	\end{align}  
	We consider the case where $\underline{b}_i \geq 0$ and $\underline{\sigma}_i > 0$, $i = 1,2$. \accept{We assume 
\begin{align}	\label{assump b sigma}
\overline{b}_2\leq \overline{b}_1,  \underline{b}_2\leq \underline{b}_1, \overline{\sigma}_2\geq \overline{\sigma}_1,\ \text{and}\ \underline{\sigma}_2\geq \underline{\sigma}_{1}.
\end{align}
}
\begin{proposition}
	 Let the optimal portfolio strategy be $\mathbf{\hat{\alpha}}=\binom{\hat{\alpha}_1} {\hat{\alpha}_2}$. Then we have the following possible cases:\\ 
		(1) If $\underline{\rho}>\frac{\overline{b}_2/\underline{\sigma}_2}{\underline{b}_1/\overline{\sigma}_1}$, the worst case scenario is $\hat{\rho}=\underline{\rho}, \hat{b}_1=\underline{b}_1,\hat{b}_2=\overline{b}_2,\hat{\sigma}_1=\overline{\sigma}_1,\hat{\sigma}_2=\underline{\sigma}_2$, the optimal portfolio is $\hat{\alpha}_1=\frac{1}{2\lambda(1-\underline{\rho}^{2})}(\underline{b}_1/\overline{\sigma}_1-\underline{\rho} \overline{b}_2/\underline{\sigma}_2)>0$, and $\hat{\alpha}_2=\frac{1}{2\lambda(1-\underline{\rho}^{2})}(\overline{b}_2/\underline{\sigma}_2-\underline{\rho} \underline{b}_1/\overline{\sigma}_1)<0$.\\
		(2)If $\overline{\rho}<\frac{\underline{b}_2/\overline{\sigma}_2}{\underline{b}_1/\overline{\sigma}_1}$, the worst case scenario is $\hat{\rho}=\overline{\rho}, \hat{b}_1=\underline{b}_1,\hat{b}_2=\underline{b}_2,\hat{\sigma}_1=\overline{\sigma}_1,\hat{\sigma}_2=\overline{\sigma}_2$, the optimal portfolio is $\hat{\alpha}_1=\frac{1}{2\lambda(1-\overline{\rho}^{2})}(\underline{b}_1/\overline{\sigma}_1-\overline{\rho} \underline{b}_2/\overline{\sigma}_2)>0$, and $\hat{\alpha}_2=\frac{1}{2\lambda(1-\overline{\rho}^{2})}(\underline{b}_2/\overline{\sigma}_2-\overline{\rho} \underline{b}_1/\overline{\sigma}_1)>0$.\\ 
		(3) Otherwise, the worst case scenario is $\hat{b}_1=\underline{b}_1, \hat{\sigma}_1=\overline{\sigma}_1$, \accept{and $\hat{b}_2, \hat{\sigma}_2, \hat{\rho}$ is chosen to satisfy $\frac{\hat{b}_2}{\hat{\sigma}_2} = \hat{\rho} \frac{\underline{b}_1}{\bar{\sigma}_1}$ }, the optimal portfolio is $\hat{\alpha}_1=\frac{1}{2\lambda}\underline{b}_1/\overline{\sigma}_1$, and $\hat{\alpha}_2=0$. 
	\end{proposition}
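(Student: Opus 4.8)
My plan is to collapse the five-dimensional minimization defining the worst case into a three-dimensional one in the normalized drifts $u_1:=b_1/\sigma_1$, $u_2:=b_2/\sigma_2$ and the correlation $\rho$. A direct computation from \eqref{equ V} rewrites the objective as
\[ \bm{b}^{\top}\bm{\Sigma}^{-1}\bm{b}=\frac{u_1^2+u_2^2-2\rho u_1u_2}{1-\rho^2}=:Q(u_1,u_2,\rho), \]
which depends on $(b_1,\sigma_1)$ and $(b_2,\sigma_2)$ only through $u_1$ and $u_2$. Because $\Theta$ is a product set and $b_i\ge 0$, each $u_i$ ranges independently over $[\underline u_i,\overline u_i]$ with $\underline u_i=\underline b_i/\overline\sigma_i$ and $\overline u_i=\overline b_i/\underline\sigma_i$, and the ordering assumption \eqref{assump b sigma} gives $\underline u_2\le\underline u_1$ and $\overline u_2\le\overline u_1$. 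So I minimize $Q$ over the box $[\underline u_1,\overline u_1]\times[\underline u_2,\overline u_2]\times[\underline\rho,\overline\rho]$ and then read off the portfolio from $\hat{\bm{\alpha}}=\frac{1}{2\lambda}\hat{\bm{\Sigma}}^{-1}\hat{\bm{b}}$ of Example \ref{exam wealth}.

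The main tool is the completing-the-square identity
\[ Q=u_1^2+\frac{(u_2-\rho u_1)^2}{1-\rho^2}=u_2^2+\frac{(u_1-\rho u_2)^2}{1-\rho^2}. \]
Since $1-\rho^2>0$ this yields $Q\ge u_1^2\ge\underline u_1^2$, with equality in the first bound exactly when $u_2=\rho u_1$ (equivalently $\hat\alpha_2=0$) and in the second when $u_1=\underline u_1$. Moreover, for fixed $(u_1,u_2)$ the map $\rho\mapsto Q$ blows up as $\rho\to\pm 1$ and has a single interior critical point at $\rho=\min(u_1,u_2)/\max(u_1,u_2)$; hence it is U-shaped, and the optimal $\rho$ is determined by whether that critical value falls inside $[\underline\rho,\overline\rho]$.

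Case (3) is precisely the regime in which the lower bound $\underline u_1^2$ is attainable: one takes $u_1=\underline u_1$ and $u_2=\rho u_1$ with $u_2\in[\underline u_2,\overline u_2]$, $\rho\in[\underline\rho,\overline\rho]$. This is possible iff the segment $\{\rho\underline u_1:\rho\in[\underline\rho,\overline\rho]\}$ meets $[\underline u_2,\overline u_2]$, i.e. iff $\underline\rho\le\overline u_2/\underline u_1$ and $\overline\rho\ge\underline u_2/\underline u_1$ --- which is exactly the negation of the hypotheses of (1) and (2), upon recalling $\overline u_2/\underline u_1=\frac{\overline b_2/\underline\sigma_2}{\underline b_1/\overline\sigma_1}$ and $\underline u_2/\underline u_1=\frac{\underline b_2/\overline\sigma_2}{\underline b_1/\overline\sigma_1}$. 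There $\hat b_1=\underline b_1$, $\hat\sigma_1=\overline\sigma_1$, $\hat\alpha_2=0$, and $\hat\alpha_1>0$.

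For (1) the hypothesis forces $u_2\le\overline u_2<\underline\rho\underline u_1\le\rho u_1$, hence $u_2<\underline u_1$ and $\rho u_2<\underline u_1$ at every feasible point; the inner minimization in $u_1$ therefore always lands at $u_1=\underline u_1$, after which $u_2-\rho\underline u_1<0$ makes the remaining U-shaped problem monotone in each variable and gives $u_2=\overline u_2$, $\rho=\underline\rho$, $\hat\alpha_2<0$. Case (2) is the delicate one and the chief obstacle: the competitor $u_1=\rho u_2$ (which would give $Q=u_2^2$ and $\hat\alpha_1=0$) may now be feasible, so $u_1=\underline u_1$ is not immediate. I rule it out by noting that $u_1=\rho u_2\ge\underline u_1$ forces $u_2\ge\underline u_1/\rho>\underline u_1$, whence $Q=u_2^2\ge(\underline u_1/\overline\rho)^2$, and an elementary estimate using $\underline u_2\le\underline u_1$ and $\overline\rho<1$ shows this never undercuts the value $\underline u_1^2+\frac{(\underline u_2-\overline\rho\underline u_1)^2}{1-\overline\rho^2}$ attained at $(\underline u_1,\underline u_2,\overline\rho)$; once $u_1=\underline u_1$ is secured, $\overline\rho<\underline u_2/\underline u_1$ gives $u_2-\rho\underline u_1>0$ throughout, so $Q$ decreases in $\rho$ toward $(\underline u_2,\overline\rho)$, with $\hat\alpha_1,\hat\alpha_2>0$. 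In every case the announced signs then follow by substituting the worst-case parameters into $\hat{\bm{\alpha}}=\frac{1}{2\lambda}\hat{\bm{\Sigma}}^{-1}\hat{\bm{b}}$, whose components are $\hat\alpha_1=\frac{u_1-\rho u_2}{2\lambda\sigma_1(1-\rho^2)}$ and $\hat\alpha_2=\frac{u_2-\rho u_1}{2\lambda\sigma_2(1-\rho^2)}$.
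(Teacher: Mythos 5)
Your proof is correct, and it takes a genuinely different route from the paper's. The paper stays in the five original variables $(b_1,b_2,\sigma_1,\sigma_2,\rho)$: it computes all first-order conditions \eqref{b1}--\eqref{rho}, splits on the sign of $\rho$, pins the drifts and volatilities to corners by monotonicity, then splits again on whether $\hat{b}_2/\hat{\sigma}_2\leq \hat{b}_1/\hat{\sigma}_1$ holds at the optimum, locating $\hat{\rho}$ subcase by subcase and invoking Assumption \eqref{assump b sigma} to rule out the opposite ordering by contradiction, before combining all subcases. You instead reduce at the outset to minimizing $Q(u_1,u_2,\rho)$ over a three-dimensional box in the Sharpe ratios $u_i=b_i/\sigma_i$ --- legitimate because $\Theta$ is a product set, the objective in \eqref{equ V} depends on $(b_i,\sigma_i)$ only through $u_i$, and the extreme ratios $\underline{u}_i=\underline{b}_i/\overline{\sigma}_i$, $\overline{u}_i=\overline{b}_i/\underline{\sigma}_i$ are attained precisely at the corners named in the Proposition --- and then run everything off the identity $Q=u_1^2+\frac{(u_2-\rho u_1)^2}{1-\rho^2}$ together with the factorization $\partial Q/\partial\rho=\frac{2(\rho u_1-u_2)(u_1-\rho u_2)}{(1-\rho^2)^2}$, which I verified and which is the same quantity as the paper's \eqref{rho}. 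This organization buys two genuine simplifications: case (3) drops out of the global bound $Q\geq \underline{u}_1^2$ plus an attainability criterion that is literally the negation of the hypotheses of (1) and (2), with no need for the paper's separate $\rho\leq 0$ branch or its boundary subcase \eqref{2}; and in case (2) the dangerous interior competitor $u_1=\rho u_2$ (value $u_2^2$) is eliminated by your explicit comparison, which after clearing denominators reduces to $\underline{u}_1(1-\overline{\rho}^2)\geq \overline{\rho}\,(\underline{u}_2-\overline{\rho}\,\underline{u}_1)$, i.e.\ $\underline{u}_1\geq \overline{\rho}\,\underline{u}_2$, guaranteed by $\underline{u}_2\leq \underline{u}_1$ from \eqref{assump b sigma} and $\overline{\rho}<1$ --- this is exactly where your argument absorbs the role the assumption plays in the paper's ``Case 2 never happens'' step. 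Your monotonicity chains check out: in case (1) the hypothesis forces $\underline{\rho}>0$ and $\rho u_2\leq u_2\leq \overline{u}_2<\underline{\rho}\,\underline{u}_1\leq u_1$ throughout the box, so the interior critical point in $u_1$ is never feasible and the descent to $(\underline{u}_1,\overline{u}_2,\underline{\rho})$ is valid, and in case (2) the point $(\underline{u}_2,\overline{\rho})$ indeed lies in the region where $u_1=\underline{u}_1$ is optimal, so the comparison value is attained. One remark: your components $\hat{\alpha}_1=\frac{u_1-\rho u_2}{2\lambda\sigma_1(1-\rho^2)}$ and $\hat{\alpha}_2=\frac{u_2-\rho u_1}{2\lambda\sigma_2(1-\rho^2)}$ are the correct evaluation of $\frac{1}{2\lambda}\hat{\bm{\Sigma}}^{-1}\hat{\bm{b}}$ and carry a factor $1/\hat{\sigma}_i$ that the Proposition's displayed formulas omit; the sign conclusions, which are the substance of the statement, agree in all three cases.
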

\begin{proof}
	See proof of Proposition 2 in Appendix.
\end{proof}
\accept{In the above Case 1, $\frac{b_2}{\sigma_2}$ takes the maximal value allowed, which is counterintuitive,  since the ``worst case" usually corresponds to maximal volatility and minimal return. Our explanation is as follows. Mathematically, the worst case is to find the worst $\bm{b}^{\top} \bm{\Sigma}^{-1} \bm{b}$.  Noticing that the right side of \eqref{equ V} can be written as
\begin{align}\label{equ V new}
	 \frac{1}{(1-\rho^2)}\bigg(\frac{b_1^2}{\sigma_1^2}+\frac{b_2^2}{\sigma_2^2} -2 \rho \frac{b_1}{\sigma_1} \frac{b_2}{\sigma_2}\bigg),
\end{align}  
and the derivative on $\frac{b_2}{\sigma_2}$ is always negative for a high $\rho$. 
 Financially, when both $\frac{b_1}{\sigma_1}$ and $\rho$ are high, the investor takes a short position on stock 2, and thus, a higher $\frac{b_2}{\sigma_2}$ implies a lower value of the criterion. 
	  }
	\begin{remark}(Financial Interpretation)
		Drift $b$ and marginal volitality $\sigma$ take effect of the worst case scenario in the form of Sharpe ratio $b/\sigma$. In the first case, when {correlation coefficient} is large enough, the investor {can benefit from hedging}, thus he  longs one stock which has larger $b/\sigma$ and shorts the other. In the second case, when the correlation coefficient is small, {the hedging benefit can not cover the loss from shorting an asset with positive return rate. }the investor longs both stocks. In the last case, investor will only long the stock with higher Sharpe ratio and ignore the other.
	\end{remark}
		
	\section{Model Extension with Jumps and State-Dependent Risk Aversion}
	We can further introduce jump ambiguity and state-dependent risk aversion level into our framework {in Section \ref{sec model}}. More precisely, we {assume the self-financing wealth process is} {a L\'evy process as follows:}
	\begin{align}\label{equ wealth gene}
		X^{\balpha, \btheta}_s = & \int_t^s \eta(\balpha_v, \btheta_v) dv + \int_t^s \bxi(\balpha_v, \btheta_v) d\B_v \notag \\
		&+ \int_{v\in [t, s], \|\z\|_2\geq1} {\zeta(X^{\balpha, \btheta}_{v-}, \balpha_v, \z)} N(dv, d\z) \notag \\
		&+  \int_{v\in [t, s], \|\z\|_2< 1} \zeta(X^{\balpha, \btheta}_{v-}, \balpha_v, \z) \bigg(N(dv, d\z) - \F^{\btheta_v}(d\z) dv\bigg),\quad {s\in [t, T]},\quad {X^{\balpha, \btheta}_t = x,}
	\end{align}
	where  $\z\in \mathbb{R}^k\backslash\{0\}$, {$\zeta: \mathbb{R} \times \mathbb{R}^n\times \mathbb{R}^k \to \mathbb{R}$}.
	{The jump measure $N$ is a Poisson random measure on $[0, +\infty)\times \R^k$ with intensity measure 
	$\F^{\btheta_v}$}, {which depends on the scenario $\btheta_v$. Moreover, for any $\btheta \in \Theta$, $\F^\btheta$ is a Radon measure on $\R^k\backslash\{0\}$ with 
	\begin{align}
		\int_{\|\z\|_2 <1} \|\z\|_2^2 \F^{\btheta}(d\z) < +\infty, \quad  	\int_{\|\z\|_2 \geq 1} \F^{\btheta}(d\z) < +\infty. \notag
	\end{align}
}
{The admissible control set becomes
\begin{align}
	\mathscr{A}_t =& \bigg\{\balpha_s\in \R^{n}, t{\leq} s \leq T \big| 
	 E\big[\int_t^T \|\bxi(\balpha_s, \btheta_s)\|_2^2ds \big]< +\infty, {E\big[\int_t^T |\eta(\balpha_s, \btheta_s)|ds \big]< +\infty},\notag \\
&\qquad \zeta(X^{\balpha, \btheta}_{s-}, \balpha_s, \z)\in \R, \quad E\Bigg[\int_{s\in [t, T], \|\z\|_2\geq1} |\zeta(X^{\balpha, \btheta}_{s-}, \balpha_s, \z)| \F^{\btheta_s}(d\z) ds\Bigg]< +\infty,\notag \\
	 &\qquad E\Bigg[ \int_{s\in [t, T], \|\z\|_2< 1} \zeta^2(X^{\balpha, \btheta}_{s-}, \balpha_s, \z)  \F^{\btheta_s}(d\z) ds\Bigg]<+\infty,
	  \quad \forall 
	  {\btheta_s \in \Theta, t\leq s \leq T,}\ \z \in \R^k\backslash\{0\}  \bigg\}. \notag
\end{align}
}
	
{Since we introduce wealth-dependent risk aversion level,} the functional \eqref{equ func J} {changes into}
	\begin{align}
		J(t, x; \balpha, \btheta) = {E}_t[X^{\balpha, \btheta}_{T}]-\lambda(x) Var_t(X^{\balpha, \btheta}_{T}) = {E}_t[X^{\balpha, \btheta}_{T}-\lambda(x) (X^{\balpha, \btheta}_{T})^2] +  \lambda(x) {E}_t^2[X^{\balpha, \btheta}_T], \notag
	\end{align}  
	where $\lambda(x)$ is a deterministic function of $x$. Denote 
\begin{numcases}{}	
	V(t, x) =\sup_{\balpha\in\mathscr{A}_t}\inf_{\btheta \in \Theta_{(t, T]} } J(t, x; \balpha, \btheta),\notag\\ 
	g(t, x) =  E[X_{T}^{\hat{\balpha}, {\btheta^{\hat{\balpha}}}}], \notag\\ 
	f^y(t,x)= f(t, x, y),\notag
\end {numcases} 
where $$f(t, x, y) = E[X_{T}^{\hat{\balpha},{\btheta^{\hat{\balpha}}}}-\lambda(y)(X_{T}^{\hat{\balpha}, {\btheta^{\hat{\balpha}}}})^{2}],$$ 
{which separates the wealth level $x$ from the risk aversion parameter $\lambda(y)$.} 
Then $V(t,x)=f^x(t,x)+\lambda(x)g^{2}(t,x)$. The corresponding PDE system is
	\begin{align}\label{equ2 exten PDE}
		\left\{\begin{matrix}
			\sup\limits_{\balpha\in \R^{n}}\inf\limits_{\btheta\in \Theta}\{ \mathcal{A}^{\balpha,\btheta}f^{x}(t,x)+2\lambda(x)g(t, x) \mathcal{A}^{\balpha,\btheta}g(t,x)\}=0\\
			\mathcal{A}^{\hat{\balpha}, {\btheta^{\hat{\balpha}}}}f^x(t,x)=0\\
			\mathcal{A}^{\hat{\balpha}, {\btheta^{\hat{\balpha}}}}g(t,x)=0\\
			f^x(T,x)=x-\lambda(x)x^{2}\\
			g(T,x)=x,
		\end{matrix}\right.
	\end{align}	
	where 
	\begin{align}\label{equ opera A}
		&\mathcal{A}^{\balpha,\btheta}\psi(t,x)=\psi_t+ \eta(\balpha, \btheta) \psi_{x}(t, x)+\frac{1}{2}\|\bxi\|_2^2(\balpha, \btheta) \psi_{xx}(t,x)\notag \\
		&+\int \bigg(\psi(t,x+\zeta(x,\balpha,\z))-\psi(t,x)- \zeta(x,\balpha, \z)\mathbf{1}_{\|\z\|_2<1} \psi_{x}(t, x)\bigg)\F^{\btheta}(d\z),
	\end{align}
	and {$\hat{\balpha}$, ${\btheta^{\hat{\balpha}}}$} realize the {$\sup\limits_{\balpha\in \R^{n}}\inf\limits_{\btheta\in \Theta}$} in the first equation of \eqref{equ2 exten PDE}. 
	
	In the following, we will verify that a solution to the PDE system is related to an equilibrium.  Before introducing the theorem, we first list the assumptions we need. 
	\begin{assumption} \label{assump 2}
		1. For any $(x, \balpha) \in \mathbb{R}\times {\R^{n}}$, $\zeta(x, \balpha, \z)$ is locally uniformly bounded w.r.t. $\z$. Moreover, there is a constant $K>0$, such that $\limsup \limits_{\|\z\|_2\to \infty} \frac{\zeta(x, \balpha, \z)}{\|\z\|_2^K} = 0$. \\ 
		2. $\int_{\|\z\|_2<1} \|\z\|_2^2 F^\btheta(d\z) + \int_{\|\z\|_2\geq1} \|\z\|_2^{K^2} F^\btheta(d\z)$ is bounded for any  $\btheta \in \Theta$. 
	\end{assumption}
	
	\begin{theorem}\label{thm veri}
	\accept{Under Assumption \ref{assump 2},} assume we find a smooth solution $(f, g)$ to the  PDE system \eqref{equ2 exten PDE}, and 
		\begin{align}\label{equ condi V veri}
			&\sup \limits_{t_0\leq t\leq T}\limsup \limits_{|x|\to +\infty} \frac{{f^x(t, x)+\lambda(x)g^{2}(t,x) }}{|x|^K} = 0,\  \text{for any $t_0 <T$}.
		\end{align}
		Then {the implied strategy} $\hat{\balpha}$ is a robust equilibrium {strategy}, {the implied $\btheta^{\hat{\balpha}}$ is the corresponding worst-case scenario,} and 
		\begin{align}
			V(t, x): = f^x(t,x)+\lambda(x)g^{2}(t,x) \notag
		\end{align}
		is the  value function.  
	\end{theorem}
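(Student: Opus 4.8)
The plan is to read off the two defining inequalities of a robust equilibrium from a first-order expansion in the freezing length $h$, after first giving the PDE solution its probabilistic meaning. I begin with the representation step. For each fixed risk-aversion argument $y$, the function $(t,z)\mapsto f(t,z,y)$ solves $\mathcal{A}^{\hat{\balpha},\btheta^{\hat{\balpha}}}f^{y}=0$ with terminal value $z-\lambda(y)z^{2}$, and $g$ solves $\mathcal{A}^{\hat{\balpha},\btheta^{\hat{\balpha}}}g=0$ with terminal value $z$. Applying the It\^o--Dynkin formula for the jump process \eqref{equ wealth gene} to $g(s,X^{\hat{\balpha},\btheta^{\hat{\balpha}}}_{s})$ and to $f(s,X^{\hat{\balpha},\btheta^{\hat{\balpha}}}_{s},x)$ between $t$ and $T$, the PDE annihilates the drift, leaving a pure stochastic integral, so both processes are local martingales. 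The technical heart of the step is to upgrade them to genuine martingales: Assumption \ref{assump 2} (the growth control $\zeta(x,\balpha,\z)/\|\z\|_2^{K}\to0$ and the finite $\|\z\|_2^{K^2}$-moment of $\F^{\btheta}$) bounds the moments of $X^{\hat{\balpha},\btheta^{\hat{\balpha}}}$ and the compensated jump integrals, while \eqref{equ condi V veri} controls the growth of $f^x+\lambda g^2$ so that the stopped integrals are uniformly integrable. Taking expectations yields
\[
g(t,x)=E_t\big[X^{\hat{\balpha},\btheta^{\hat{\balpha}}}_{T}\big],\qquad f^x(t,x)=E_t\big[X^{\hat{\balpha},\btheta^{\hat{\balpha}}}_{T}-\lambda(x)(X^{\hat{\balpha},\btheta^{\hat{\balpha}}}_{T})^{2}\big],
\]
so that $V(t,x)=f^x(t,x)+\lambda(x)g^2(t,x)=J(t,x;\hat{\balpha},\btheta^{\hat{\balpha}})$, which is the value-function assertion.

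Next I verify that $\btheta^{\hat{\balpha}}$ is a worst-case scenario for $\hat{\balpha}$. Under $\btheta^{\hat{\balpha}}_{h,\bm u}$ the dynamics from $t+h$ onward are again driven by $(\hat{\balpha},\btheta^{\hat{\balpha}})$, so by the Markov property and Step 1,
\[
J(t,x;\hat{\balpha},\btheta^{\hat{\balpha}}_{h,\bm u})=E_t\big[f(t+h,X_{t+h},x)\big]+\lambda(x)\big(E_t[g(t+h,X_{t+h})]\big)^2,
\]
where $X$ is driven by $(\hat{\balpha},\bm u)$ on $[t,t+h]$. A Dynkin expansion on $[t,t+h]$ together with expanding the square gives
\[
J(t,x;\hat{\balpha},\btheta^{\hat{\balpha}}_{h,\bm u})=V(t,x)+h\big[\mathcal{A}^{\hat{\balpha},\bm u}f^x(t,x)+2\lambda(x)g(t,x)\mathcal{A}^{\hat{\balpha},\bm u}g(t,x)\big]+o(h).
\]
Because $\btheta^{\hat{\balpha}}$ attains the inner infimum in the first equation of \eqref{equ2 exten PDE}, the bracket is nonnegative for every $\bm u\in\Theta$; dividing by $h$, the difference $J(t,x;\hat{\balpha},\btheta^{\hat{\balpha}})-J(t,x;\hat{\balpha},\btheta^{\hat{\balpha}}_{h,\bm u})$ has $\limsup_{h\to0}\le0$, which is exactly the worst-case condition.

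For the equilibrium inequality I first note that $\hat{\balpha}_{h,w}$ agrees with $\hat{\balpha}$ on $(t+h,T]$; since the worst-case requirement is time-local, $\btheta^{\hat{\balpha}}$ is again a worst-case scenario for $\hat{\balpha}_{h,w}$ on that interval, so one may take $\btheta^{\hat{\balpha}_{h,w}}=\btheta^{\hat{\balpha}}$ there, and the perturbation $\btheta^{\hat{\balpha}_{h,w}}_{h,\bm u}$ then differs from $\btheta^{\hat{\balpha}}$ only through the head value $\bm u$ on $[t,t+h)$. The same expansion, now with $(\bm w,\bm u)$ on $[t,t+h]$, gives
\[
J(t,x;\hat{\balpha}_{h,w},\btheta^{\hat{\balpha}_{h,w}}_{h,\bm u})=V(t,x)+h\big[\mathcal{A}^{\bm w,\bm u}f^x(t,x)+2\lambda(x)g(t,x)\mathcal{A}^{\bm w,\bm u}g(t,x)\big]+o(h).
\]
Taking $\inf_{\bm u\in\Theta}$ and invoking the $\sup$--$\inf$ equation of \eqref{equ2 exten PDE}, which forces $\inf_{\bm u}$ of the bracket to be $\le0$ for every $\bm w$, shows that
\[
\liminf_{h\to0}\frac{J(t,x;\hat{\balpha},\btheta^{\hat{\balpha}})-\inf_{\bm u\in\Theta}J(t,x;\hat{\balpha}_{h,w},\btheta^{\hat{\balpha}_{h,w}}_{h,\bm u})}{h}=-\inf_{\bm u\in\Theta}\big[\mathcal{A}^{\bm w,\bm u}f^x+2\lambda g\,\mathcal{A}^{\bm w,\bm u}g\big]\ge0,
\]
which is the equilibrium property.

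I expect two places to carry the real weight. The first is the martingale upgrade in Step 1: one must match the exponents $K$ and $K^2$ of Assumption \ref{assump 2} against the growth rate in \eqref{equ condi V veri} to bound $E[\sup_{s}|X_s|^{K}]$-type quantities and to discard the localizing sequence, which is exactly where the jump integrability and the polynomial control of $V$ are consumed. The second is the interchange of $\inf_{\bm u}$ with the $o(h)$ expansion in Step 3: I would make the remainder uniform in $\bm u$ (and in $\bm w$) using compactness of $\Delta$, convexity and continuity on $\Gamma$, and the moment bounds from Step 1, so that the infimum of the expansion equals the expansion of the infimum up to $o(h)$.
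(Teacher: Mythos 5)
Your proposal follows essentially the same route as the paper's proof: identify $(f,g)$ with the conditional expectations so that $V(t,x)=J(t,x;\hat{\balpha},\btheta^{\hat{\balpha}})$, exploit that the perturbed controls agree with $(\hat{\balpha},\btheta^{\hat{\balpha}})$ on $[t+h,T]$ (so that $J$, $g$, $f$ coincide with $V$, $g$, $f$ from time $t+h$ on), expand to first order in $h$ over the frozen interval, and read both defining inequalities off the sup--inf equation in \eqref{equ2 exten PDE}. Two differences are worth noting. In your favor: you make explicit the Feynman--Kac/martingale-upgrade step and the verification that $\btheta^{\hat{\balpha}}$ is a worst-case scenario, both of which the paper's written proof leaves implicit (``by definition''); your Step 2, where the bracket $\mathcal{A}^{\hat{\balpha},\bm{u}}f^x+2\lambda(x)g\,\mathcal{A}^{\hat{\balpha},\bm{u}}g\ge 0$ for every $\bm{u}$ because $\btheta^{\hat{\balpha}}$ attains the inner infimum, is correct and is content the paper asserts without proof. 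On the paper's side: the rigor of the first-order expansion is carried by Lemma \ref{lem main thm} (continuity in $t$ of the nonlocal generator, proved by splitting the L\'evy integral into a compact piece and a tail controlled by Assumption \ref{assump 2}), which is the precise analogue of the continuity your Dynkin expansion needs.

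The one place your argument overreaches is the final claimed \emph{equality} $\liminf_{h\to0}(\cdots)/h=-\inf_{\bm{u}}[\mathcal{A}^{\bm{w},\bm{u}}f^x+2\lambda g\,\mathcal{A}^{\bm{w},\bm{u}}g]$: this interchange needs the $o(h)$ remainder uniform in $\bm{u}$, and the tools you propose for it --- compactness of $\Delta$, convexity of $\Gamma$ --- belong to Assumption \ref{assump zhou mf}, which governs the Section 3 examples but is \emph{not} a hypothesis of Theorem \ref{thm veri}; in the jump model $\btheta$ also indexes the L\'evy measure $\F^{\btheta}$, for which no compactness is available. Fortunately, uniformity is not needed for the inequality you must prove, because the infimum sits inside $J$: for every fixed $\bm{u}_0$ one has
\begin{align}
J(t,x;\hat{\balpha},\btheta^{\hat{\balpha}})-\inf_{\bm{u}\in\Theta}J(t,x;\hat{\balpha}_{h,w},\btheta^{\hat{\balpha}_{h,w}}_{h,\bm{u}})\;\ge\; J(t,x;\hat{\balpha},\btheta^{\hat{\balpha}})-J(t,x;\hat{\balpha}_{h,w},\btheta^{\hat{\balpha}_{h,w}}_{h,\bm{u}_0}),\notag
\end{align}
so you may expand along the single constant pair $(\bm{w},\bm{u}_0)$ and take the supremum over $\bm{u}_0$ afterwards (a $\liminf$ of a supremum dominates the supremum of the $\liminf$'s), giving $\liminf\ge-\inf_{\bm{u}}[\cdots]\ge 0$ with no uniform estimate. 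This is exactly the paper's device: it picks a near-minimizer $\bm{u}_\epsilon$ with generator value at most $\epsilon$ (possible since $\inf_{\bm{u}}\le\sup_{\bm{w}}\inf_{\bm{u}}=0$), expands only along $(\bm{w},\bm{u}_\epsilon)$ via Lemma \ref{lem main thm}, obtains $\liminf\ge-\epsilon$, and lets $\epsilon\to0$. With that one-line repair your proof is correct, and in the representation and worst-case steps it is more complete than the paper's.
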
 
\begin{proof}[Proof of Theorem \ref{thm veri}]
The following lemma plays an important role in proving the main Theorem.
\begin{lemma}\label{lem main thm}
	For any constant $\balpha$ and $\btheta$, the $C^{1, 2}$ function $v(t, x)$ {satisfying \eqref{equ condi V veri}}, {we have}
	\begin{align}
		\liminf \limits_{h\to 0} \mathcal{A}^{\balpha, \btheta}v(t+h, x) = \mathcal{A}^{\balpha, \btheta}v(t, x).
	\end{align} 
\end{lemma}
\begin{proof}
	
	Since $v$ is $C^{1,2}$, we have 
	\begin{align}
		&\lim \limits_{h\to 0} v_t(t+h,x)+ \eta(\balpha, \btheta) v_{x}(t+h, x)+\frac{1}{2}\|\bxi\|_2^2(\balpha, \btheta) v_{xx}(t+h, x) \notag \\
		= &  v_t(t, x)+ \eta(\balpha, \btheta) v_{x}(t, x)+\frac{1}{2}\|\bxi\|_2^2(\balpha, \btheta) v_{xx}(t, x).
	\end{align}
	For the jump term in $\mathcal{A}^{\balpha,\btheta}$,  
	\begin{align}
		&\int \bigg(v(t+h, x+\zeta(x,\balpha,\bm{z}))-v(t+h, x)- \zeta(x,\balpha, \bm{z})\mathbf{1}_{||\bm{z}||_2<1} v_{x}(t+h, x)\bigg)\F^{\btheta}(d\bm{z}) \notag\\
		=& \int \bigg[\bigg(v(t+h, x+\zeta(x,\balpha,\bm{z}))-v(t+h, x)- \zeta(x,\balpha, \bm{z})\mathbf{1}_{||\bm{z}||_2<1} v_{x}(t+h, x ) \bigg) \notag \\
		&\qquad - \bigg(v(t, x+\zeta(x,\balpha,z))-v(t, x)- \zeta(x,\balpha, \bm{z})\mathbf{1}_{||\bm{z}||_2<1} v_{x}(t, x ) \bigg)\bigg]\F^{\btheta}(d\bm{z})  \notag \\
		&\quad +   \int\bigg(v(t, x+\zeta(x,\balpha,\bm{z}))-v(t, x)- \zeta(x,\balpha, \bm{z})\mathbf{1}_{||\bm{z}||_2<1} v_{x}(t, x ) \bigg)\F^{\btheta}(d\bm{z}). \notag
	\end{align}
	We only need to show 
	\begin{align}\label{equ conver inte term}
		&  \int \bigg[\bigg(v(t+h, x+\zeta(x,\balpha,\bm{z}))-v(t+h, x)- \zeta(x,\balpha, \bm{z})\mathbf{1}_{||\bm{z}||_2<1} v_{x}(t+h, x ) \bigg) \notag \\
		&\qquad - \bigg(v(t, x+\zeta(x,\balpha,\bm{z}))-v(t, x)- \zeta(x,\balpha, \bm{z})\mathbf{1}_{||\bm{z}||_2<1} v_{x}(t, x ) \bigg)\bigg]\F^{\btheta}(d\bm{z})   \to 0. 
	\end{align}
	Since $v$ is $C^{1,2}$, for any $M >0$, $v(t+h, \cdot)$ ($v_x (t+h, \cdot)$) converges uniformly on any compact set $K_{\bm{z}}$ to $v(t, \cdot)$ ($v_x (t, \cdot)$) as $h \to 0$. Then for any compact set $K_{\bm{z}}$
	\begin{align}
		&\lim \limits_{h\to 0}\int_{K_{\bm{z}}} \bigg[\bigg(v(t+h, x+\zeta(x,\balpha,\bm{z}))-v(t+h, x)- \zeta(x,\balpha, \bm{z})\mathbf{1}_{||\bm{z}||_2<1} v_{x}(t+h, x ) \bigg) \notag \\
		&\qquad - \bigg(v(t, x+\zeta(x,\balpha,\bm{z}))-v(t, x)- \zeta(x,\balpha, \bm{z})\mathbf{1}_{||\bm{z}||_2<1} v_{x}(t, x ) \bigg)\bigg]\F^{\btheta}(d\bm{z})  = 0,
	\end{align}  
	On the complementary set $K_{\bm{z}}^c$, we have 
	\begin{align}
		&|\int_{K_{\bm{z}}^c} \bigg[\bigg(v(t+h, x+\zeta(x,\balpha,\bm{z}))-v(t+h, x)- \zeta(x,\balpha, \bm{z})\mathbf{1}_{||\bm{z}||_2<1} v_{x}(t+h, x ) \bigg) \bigg] \F^{\btheta}(d\bm{z}) | \notag \\
		&\quad + |\int_{K_{\bm{z}}^c} \bigg[\bigg(v(t, x+\zeta(x,\balpha,\bm{z}))-v(t, x)- \zeta(x,\balpha, \bm{z})\mathbf{1}_{||\bm{z}||_2<1} v_{x}(t, x ) \bigg) \bigg] \F^{\btheta}(d\bm{z}) | \notag \\
		\leq & \int_{K_{\bm{z}}^c}  \bigg[C+ o(\bm{z}^{K^2})\bigg] \F^{\btheta}(d\bm{z})\label{equ esti compa remain}
	\end{align}  
	for some constant $C>0$. 
	For any $\epsilon>0$, we can choose a compact set $K^\epsilon_{\bm{z}}$, such that for any $h>0$, $\eqref{equ esti compa remain} \leq \epsilon$.
	That proves  \eqref{equ conver inte term}. 
\end{proof}

Next, we prove the main theorem.
Fix $(t, x)$, consider strategy $\hat{\balpha}$ and $\hat{\balpha}_h$, then by definition,  for any $s\geq t+h$, $y\in \mathbb{R}$ and $u\in\mathbb{R}$,
\begin{align}
	& J(s, y; \hat{\balpha}_h,  {\btheta}^{\hat{\alpha}}_{h, u}) = J(s, y; \hat{\balpha},  \btheta^{\hat{\balpha}}) = V(s, y), \notag \\
	&{E}_{s, y}[X_{T}^{\hat{\balpha}_h, \btheta^{\hat{\balpha}}_{h, u}}]  = {E}_{s, y}[X_{T}^{\hat{\balpha},\btheta^{\hat{\balpha}}}] = g(s, y),\notag \\
	&E_{s, y}[X_{T}^{\hat{\balpha}_h, \btheta^{\hat{\balpha}}_{h, u}}]-\lambda(z)E_{s, y}[(X_{T}^{\hat{\balpha}_h, \btheta^{\hat{\balpha}}_{h, u}})^{2}] 
	=  E_{s, y}[X_{T}^{\hat{\balpha},\btheta^{\hat{\balpha}}}]-\lambda(z)E_{s, y}[(X_{T}^{\hat{\balpha}, \btheta^{\hat{\balpha}}})^{2}]  = f(s, y, z), \notag
\end{align} 
{where $\btheta^{\hat{\balpha}}_{h, u}$ is defined as  \eqref{equ defi theta} }.

By definition of $\hat{\balpha}$, and $\btheta^{\hat{\balpha}}$, for any $\epsilon >0$, we can find $u_\epsilon \in \Theta$, such that  
\begin{align}
	{	\mathcal{A}^{\hat{\balpha}_h, \btheta^{\hat{\balpha}}_{h, u_\epsilon}}f^x(t, x) +2\lambda(x) g(t, x)\mathcal{A}^{\hat{\balpha}_h,  \btheta^{\hat{\balpha}}_{h, u_\epsilon}}g(t, x) 
		\leq  \mathcal{A}^{\hat{\balpha}, \btheta^{\hat{\balpha}}}f^x(t, x) +2\lambda(x) g(t, x)\mathcal{A}^{\hat{\balpha}, \btheta^{\hat{\balpha}}}g(t, x)+ \epsilon. \notag}
\end{align}
According to Lemma \ref{lem main thm}, we have  
\begin{align}
	&J(t, x; \hat{\balpha}, \btheta^{\hat{\balpha}}) - \inf \limits_{u} J(t, x; \hat{\balpha}_h,  \btheta^{\hat{\balpha}}_{h, u}) \notag \\
	=& \big(J(t, x; \hat{\balpha}, \btheta^{\hat{\balpha}}) -V(t, x) \big) -  \inf \limits_{u} \big(J(t, x; \hat{\balpha}_h,  \btheta^{\hat{\balpha}}_{h, u})  -V(t, x) \big) \notag \\
	\geq & \big(J(t, x; \hat{\balpha}, \btheta^{\hat{\balpha}}) -V(t, x) \big) -   \big(J(t, x; \hat{\balpha}_h,  \btheta^{\hat{\balpha}}_{h, u_\epsilon})  -V(t, x) \big) \notag \\
	=& \left[\bigg(\mathcal{A}^{\hat{\balpha}, \btheta^{\hat{\balpha}}}f^x(t, x) +2\lambda(x){g(t, x)} \mathcal{A}^{\hat{\balpha}, \btheta^{\hat{\balpha}}}g(t, x)\bigg) h + o(h)\right]\notag \\
	& \quad - \left[\bigg(\mathcal{A}^{\hat{\balpha}_h,  \btheta^{\hat{\balpha}}_{h, u_\epsilon}}f^x(t, x) +2\lambda(x)  {g(t, x)} \mathcal{A}^{\hat{\balpha}_h,  \btheta^{\hat{\balpha}}_{h, u_\epsilon}}g(t, x)\bigg) h + o(h)\right]\notag \\
	= & \left[\bigg(\mathcal{A}^{\hat{\balpha}, \btheta^{\hat{\balpha}}}f^x(t, x) +2\lambda(x) {g(t, x)} \mathcal{A}^{\hat{\balpha}, \btheta^{\hat{\balpha}}}g(t, x)\bigg) -  \bigg(\mathcal{A}^{\hat{\balpha}_h,  \btheta^{\hat{\balpha}}_{h, u_\epsilon}}f^x(t, x) +2\lambda(x)  {g(t, x)}\mathcal{A}^{\hat{\balpha}_h,  \btheta^{\hat{\balpha}}_{h, u_\epsilon}}g(t, x)\bigg) \right]h+o(h).\notag
\end{align}
Then we have
\begin{align}
	&\liminf \limits_{h\to 0} \frac{J(t, x; \hat{\balpha}, \btheta^{\hat{\balpha}}) - \inf \limits_{u} J(t, x; \hat{\balpha}_h,  \btheta^{\hat{\balpha}}_{h, u})}{h}\notag \\
	\geq &\bigg(\mathcal{A}^{\hat{\balpha}, \btheta^{\hat{\balpha}}}f^x(t, x) +2\lambda(x) {g(t, x)} \mathcal{A}^{\hat{\balpha}, \btheta^{\hat{\balpha}}}g(t, x)\bigg)\notag\\
	&\quad \qquad  - \bigg(\mathcal{A}^{\hat{\balpha}_h, u_\epsilon}f^x(t, x) +2\lambda(x)  {g(t, x)}\mathcal{A}^{\hat{\balpha}_h,  u_\epsilon}g(t, x)\bigg) \notag \\
	\geq &-\epsilon. \notag
\end{align}
That completes our proof by noticing the arbitrariness of $\epsilon$.
\end{proof}

\subsection{Explicit Solution for Some Particular Cases}\label{sec expli exten}
In this subsection, we give the explicit solution to some special cases for the mean-variance problem of terminal wealth. {All calculations are relegated to Appendix.} 


\begin{example}
We first introduce {Poisson jumps} into Example \ref{exam wealth}. More precisely, the stock dynamic is
\begin{align}
\frac{d \bm{\mathcal{S}}_{it}}{\bm{\mathcal{S}}_{it}} = \bm{b}_{it} dt + \sum_{j=1}^n \bm{\sigma}_{ijt} d \B_{jt} + \sum_{l=1}^{k}\bm{J}_{il} {Y}_{lt} d\bm{N}_{lt}, \quad i = 1,\ldots,n, \label{equ com poisson}
\end{align} 
{where} ${Y}_{lt}$ is the jump of type l with probability density $\bm{\Phi}_{l}(t,dy)$, {$l = 1,2,.., k$}, and $\bm{N_t}=(N_{1t},...,N_{kt})$ is a standard k-dimensional multivariate Poisson process with intensity $\bm{\mu}_t:k\times 1$, $\bm{J}:n\times k$ is the jump coefficient matrix with jump scaling coefficient $\bm{J}_{il}\in [0,1]$ for each $i,l$, which are all constants. Without loss of generality, we assume $rank(\bm{J})=k$. $\bm{\B_{t}},\bm{N_{t}},\bm{Y_{t}}$ are mutually independent, where $\bm{Y_{t}}: k\times k$ is a diagonal matrix with diagonal entries $Y_{1t},...,Y_{kt}$. 
{The scenario is defined as 
\begin{align}
\btheta_t = (\bm{b}_t, \bm{\Sigma}_t, \bm{\mu}_t, \bm{J}, \bm{\Phi} ).
\end{align}
}
Then the corresponding wealth follows \eqref{equ wealth gene} with 
\begin{align}
\eta(\balpha, \btheta) = \balpha^\top {\tilde{\bm{b}} },
\qquad \bm{\xi}(\balpha, \btheta) = (\balpha^\top \bm{\sigma})^\top, \qquad {\zeta(x, \balpha, \z) = \balpha^\top\bm{J}}\bm{Y}, \notag
\end{align}
 where $\tilde{\bm{b}}_i := \bm{b}_i +  \sum_{l=1}^{k}\bm{\mu}_{l}\mathbb{E}[\bm{Y}_{l}]\bm{J}_{il}\mathbf{1}_{\{\sum_{l=1}^{k}|\bm{\mu}_{l}\mathbb{E}[\bm{Y}_{l}]\bm{J}_{il}|<1\}}$, $i = 1,2,..., n$.

Notice that we need to keep our portfolio strategy $\alpha$ admissible, i.e., the wealth process {$X^{\alpha}_{t} \geq 0$} almost surely. Since $Y_{l,t}$ is mixed jumps, that is $Y_{l,t} \in (-1,\infty)$, according to {Jin, Luo, and Zeng (2021)}, $\alpha$ must satisfy the non-bankruptcy condition $\alpha^{\top}J_{l} \in [0,1]$ for each $l$.\\
\accept{Denote ${\bm{b}}_{\bm{F}}=\bm{b}+\bm{J}E[\bm{Y}]\bm{\mu}$ and $\bm{\Sigma_{F}}=\bm{\Sigma}+\sum_{l=1}^{k}\bm{\mu}_{l}\bm{J_{l}}\bm{J_{l}^{\top}}{E}[\bm{Y}_{l}^{2}]$ with $\bm{J}_{l}$ is the $l$-th column of matrix $\bm{J}$, and assume $({\bm{b}}_{\bm{F}},\bm{\Sigma_{F}} )$ satisfies Assumption \ref{assump zhou mf},
then we have the following solution. }
\begin{numcases}{}
	V(t,x)=x+\frac{1}{4\lambda}\hat{\bm{b}}_{\bm{F}}^\top\hat{\bm{\Sigma}}_{\bm{F}}^{-1}\hat{\bm{b}}_{\bm{F}}(T-t)\notag\\
	f^x(t,x)=-\lambda x^{2}+\bigg(1-\hat{\bm{b}}_{\bm{F}}^\top\hat{\bm{\Sigma}}_{\bm{F}}^{-1}\hat{\bm{b}}_{\bm{F}}(T-t)\bigg)x \notag \\
	\qquad \qquad +\frac{1}{4\lambda}\hat{\bm{b}}_{\bm{F}}^\top\hat{\bm{\Sigma}}_{\bm{F}}^{-1}\hat{\bm{b}}_{\bm{F}}(T-t)\bigg(1-\hat{\bm{b}}_{\bm{F}}^\top\hat{\bm{\Sigma}}_{\bm{F}}^{-1}\hat{\bm{b}}_{\bm{F}}(T-t)\bigg)\notag\\
	g(t,x)=x+\frac{1}{2\lambda}\hat{\bm{b}}_{\bm{F}}^\top\hat{\bm{\Sigma}}_{\bm{F}}^{-1}\hat{\bm{b}}_{\bm{F}}(T-t)\notag\\
	\hat{\balpha}=\frac{1}{2\lambda}\hat{\bm{\Sigma}}_{\bm{F}}^{-1}\hat{\bm{b}}_{\bm{F}}, \qquad 
\btheta^{\hat{\balpha}} = \arg \min\limits_{\btheta\in \Theta} {\bm{b}}_{\bm{F}}^\top{\bm{\Sigma}}_{\bm{F}}^{-1}{\bm{b}}_{\bm{F}}.    \notag  
\end{numcases}
\end{example}

\begin{example}
We next consider the case where $\lambda(x) = \frac{\lambda}{x}$ with jump under L\'evy measure $\bm{F}$. More precisely, the stock dynamic is
\begin{align}
	\frac{d \bm{\mathcal{S}}_{it}}{\bm{\mathcal{S}}_{it}} = \bm{b}_i dt + \sum_{j=1}^n \bm{\sigma}_{ij} d \B_{jt} +  \int_{\z} \bm{z} N(d\z, dt), \quad i = 1,\ldots,n, \notag
\end{align} 
{with scenario
\begin{align}
	\btheta_t = (\bm{b}_t, \bm{\Sigma}_t, \bm{F}_t).
\end{align}	
}		 
The corresponding wealth follows \eqref{equ wealth gene} with 
\begin{align}
	\eta(\balpha, \btheta) = \balpha^\top \bigg(\bm{b}+ \int_{\|\z\|_2<1} \z \bm{F}(d\z)\bigg) , \qquad \bm{\xi}(\balpha, \btheta) = (\balpha^\top \bm{\sigma})^\top, \qquad \zeta(x, \balpha, \z) = \balpha^\top \z. \notag
\end{align}
Denote \textcolor{black}{for any $\theta$,} \accept{${{\bm{b}}}_{\bm{F}}:={\bm{b}}+\int_{\bm{z}} \z {\bm{F}}(d\z)$ and ${\bm{\Sigma}}_{\bm{F}}:={\bm{\Sigma}}+M_{\bm{F}}$ with }
\begin{align*}
	M_{\bm{F}}:&=\left[
	\begin{array}{cccc}
		\int z_1^2\bm{F}(d\z) & 
		\int z_1z_2\bm{F}(d\z) & \cdots & 
		\int z_1z_n\bm{F}(d\z) \\
		\int z_1z_2\bm{F}(d\z) & 
		\int z_2^2\bm{F}(d\z) & \cdots &  
		\int z_2z_n\bm{F}(d\z)\\
		\vdots & \vdots & \ddots & \vdots \\
		\int z_1z_n \bm{F}(d\z) & 
		\int z_2z_n \bm{F}(d\z)& \cdots & 
		\int z_n^2\bm{F}(d\z) \\
	\end{array}
	\right].
\end{align*}
{If $({\bm{b}}_{\bm{F}},\bm{\Sigma_{F}} )$ satisfies Assumption \ref{assump zhou mf}}, we have the following solution: 
\begin{numcases}{}
	V(t,x)=\{A(t)+\lambda [A^2(t)-B(t)]\}x\notag\\
	f^x(t,x)=A(t)x-{\lambda}B(t)x\notag\\
	g(t,x)=A(t)x\notag\\
	\hat{\balpha}=\frac{1}{2\lambda}\hat{\bm{\Sigma}}_{F}^{-1}\hat{\bm{b}}_{F}\frac{A(t)+2\lambda [A^{2}(t)-B(t)]}{B(t)}x\notag\\
	\btheta^{\hat{\balpha}}=\arg \inf_{\btheta \in \Theta} \bm{b}_{\bm{F}}^{\top}\bm{\Sigma}_{\bm{F}}^{-1}\bm{b}_{\bm{F}},\notag
\end{numcases}
where $A$ and $B$ solves the ODE system
\begin{numcases}{}
	A_{t}+(\frac{A+2\lambda[A^{2}-B]}{2\lambda B}\hat{\bm{b}}_{\bm{F}}^{\top}\hat{\bm{\Sigma}}_{\bm{F}}^{-1}\hat{\bm{b}}_{\bm{F}})A=0\notag\\
	A(T)=1\notag\\
	B_{t}+\{2(\frac{A+2\lambda[A^{2}-B]}{2\lambda B}\hat{\bm{b}}_{\bm{F}}^{\top}\hat{\bm{\Sigma}}_{\bm{F}}^{-1}\hat{\bm{b}}_{\bm{F}})+\frac{(A+2\lambda[A^{2}-B])^{2}}{4\lambda^{2} B^{2}}\hat{\bm{b}}_{\bm{F}}^{\top}\hat{\bm{\Sigma}}_{\bm{F}}^{-1}\hat{\bm{b}}_{\bm{F}}\}B=0\notag\\
	B(T)=1. \notag
\end{numcases}
\end{example}

In the above examples, the worst-case scenario is constant. {Especially}, this worst-case scenario is independent of the risk aversion function $\lambda(x)$. 
\textcolor{black}{\begin{example}
		We can also introduce composited Poisson process \eqref{equ com poisson} into Example 2. More precisely,  the log wealth dynamic is 
		\begin{align}
d \ln W_t^{\balpha, \btheta} = (\balpha_t^\top \bm{b}_t - \frac{1}{2} \balpha_t^\top \bm{\Sigma}_t \balpha_t) dt  + \balpha_t^\top \bm{\sigma}_t d \B_t+\ln(\mathbf{e}_n+\bm{\alpha}_{t}^{\top}\bm{J}\bm{Y}_{t})d\bm{N}_{t},\notag
\end{align}
where $\bm{e_n}$ is the $n\times 1$ vector with all elements equals 1, and $\ln(\mathbf{a})= (\ln a_1, \ln a_2, ..., \ln a_n)^\top$ for any $n\times 1$ positive vector $\mathbf{a}$. 
Then
\begin{align}
	\eta(\balpha, \btheta) = \balpha^\top \bm{b}-\frac{1}{2}\balpha^{\top}\bm{\Sigma}\balpha ,
	\qquad \bm{\xi}(\balpha, \btheta) = (\balpha^\top \bm{\sigma})^\top, \qquad \zeta(x, \balpha, \z) =  \ln(\mathbf{e}_n+\bm{\alpha}^{\top}\bm{J}\bm{Y}_{t}), \notag
\end{align}
Since there is no close form solution, we leave the model here only.
\end{example}}

\section{Conclusion}
In this paper,  we propose a new definition of robust equilibrium strategy for a dynamic mean-variance problem.  
Our model is general enough to incorporate portfolio selection problem with wealth jumps, wealth-dependent risk aversion coefficient and mean-variance criterion for terminal portfolio wealth or log returns.
Compared with previous literature, our definition is more natural and intrinsic, {which is shown from the fact that the corresponding PDE system solution implies a robust equilibrium strategy.}
 We then explicitly solve some specific models and find that the worst-case scenario is independent of  time, wealth, and mean-variance criterion.  We also find the drift and marginal volitality take effect on the worst case scenario in the form of Sharpe ratio.


\section{Acknowledgments}
Mengge Li was partially supported by the Ministry of Education in Singapore under the grant MOE AcRF A-8000453-00-00. Chao Zhou was partially supported by the Ministry of Education in Singapore under the grant MOE AcRF A-8000453-00-00, A-0004273-00-00, A-0004589-00-00 and by NSFC under the grant award 11871364.

\bibliographystyle{plain}

\begin{thebibliography}{}
	\bibitem{}
	Basak S and Chabakauri G. (2010) Dynamic mean-variance asset allocation. \textit{The Review of Financial Studies}, 23(8), 2970-3016.
	
	\bibitem{}
	Merton R C. (1976) Option pricing when underlying stock returns are discontinuous. \textit{Journal of financial economics}, 3(1-2), 125-144.
	
	\bibitem{}
	Björk T, and Murgoci A. (2010) A general theory of Markovian time inconsistent stochastic control problems. \textit{SSRN} 1694759.
	
	\bibitem{}
	Bj\"{o}rk T, Murgoci A, and Zhou X Y. (2014) Mean-variance portfolio optimization with state‐dependent risk aversion. \textit{Mathematical Finance}, 24(1), 1-24.
	
	\bibitem{}
	Branger N and Larsen L S. (2013) Robust portfolio choice with uncertainty about jump and diffusion risk. \textit{Journal of Banking} \& \textit{Finance}, 37(12), 5036-5047.
	
	\bibitem{}
	Dai M, Jin H, Kou S, and Xu Y. (2021) A dynamic mean-variance analysis for log returns. {\it Management Science}, 67(2), 1093-1108.
	
	\bibitem{}
	Flor C R and Larsen L S. (2014) Robust portfolio choice with stochastic interest rates. {\it Annals of Finance},  10(2), 243-265.
	
	\bibitem{}
	Fouque J P, Pun C S, and Wong H Y. (2016) Portfolio optimization with ambiguous correlation and stochastic volatilities. {\it SIAM Journal on Control and Optimization}, 54(5), 2309-2338.
	
	\bibitem{}
	Garlappi L, Uppal R, and Wang T. (2007) Portfolio selection with parameter and model uncertainty: A multi-prior approach. {\it The Review of Financial Studies},  20(1), 41-81.
	
	\bibitem{}
	Hernández, C. and Possamaï, D. (2020). Me, myself and I: a general theory of non-Markovian time-inconsistent stochastic control for sophisticated agents. {\it arXiv}:2002.12572.
	
	\bibitem{}
	Jin X, Luo D, and Zeng X. (2021) Tail risk and robust portfolio decisions. \textit{Management Science}, 67(5), 3254-3275.

	\bibitem{}
	Jin H and Zhou X Y. (2015) Continuous-time portfolio selection under ambiguity. {\it Mathematical Control} \& \textit{Related Fields}, 5(3), 475.
	
	\bibitem{}
	Lin Q and Riedel F. (2014) Optimal consumption and portfolio choice with ambiguity. \textit{arXiv preprint} arXiv:1401.1639.
	
	\bibitem{}
	Liu J, and Zeng X. (2017) Correlation ambiguity and under-diversification.  \textit{SSRN} 2692692.
	
	\bibitem{}
	Markowitz H. (1952) The utility of wealth. {\it Journal of political Economy}, 60(2), 151-158.
	
	\bibitem{}
	Maenhout P J. (2004) Robust portfolio rules and asset pricing. \textit{Review of financial studies}, 17(4), 951-983.
	
	\bibitem{}
	Pham H, Wei X, and Zhou C. (2022) Portfolio diversification and model uncertainty: A robust dynamic mean‐variance approach. \textit{Mathematical Finance},  32(1), 349-404.
	
	\bibitem{}
	Pun C S. (2018) Robust time-inconsistent stochastic control problems. \textit{Automatica},  94,  249-257.
	
	\bibitem{}	
	Yan T, Han B, Pun C S, and Wong H Y. (2020) Robust time-consistent mean–variance portfolio selection problem with multivariate stochastic volatility. \textit{Mathematics and Financial Economics}, 14(4), 699-724.
	
	\bibitem{}
	Zeng Y, Li D, and Gu A. (2016) Robust equilibrium reinsurance-investment strategy for a mean–variance insurer in a model with jumps. \textit{Insurance: Mathematics and Economics}, 66, 138-152.
	
	\bibitem{}
	Zhou X Y and Li D. (2000) Continuous-time mean-variance portfolio selection: A stochastic LQ framework. \textit{Applied Mathematics and Optimization}, 42(1): 19-33.

	
	

	
	
	
\end{thebibliography}

\clearpage

\section{Appendix}
\subsection{Explicit solutions in Section \ref{sec expli solu}}
\subsubsection{Mean-variance w.r.t. wealth process}\label{sec basic wealth}
First we fix the market condition $\btheta= (\bm{b}, \bm{\Sigma})$ to be constant,	then it degenerates into the classical mean-variance problem. The corresponding PDE system is
	\begin{numcases}{}
		\sup_{\balpha\in \R^{n}}\{\mathcal{L}^{\balpha,\btheta} \bar{V}(t,x)-\lambda \mathcal{H}^{{\balpha},\btheta}   \bar{g}(t,x) \}= \mathcal{L}^{\hat{\balpha},\btheta} \bar{V}(t,x)-\lambda \mathcal{H}^{\hat{\balpha},\btheta}   \bar{g}(t,x) =0\notag\\
		\mathcal{L}^{ \hat{\balpha},\btheta} \bar{g}(t,x)=0 \notag\\
		 \bar{V}(T,x)=x \notag\\
		 \bar{g}(T,x)=x,\notag
	\end{numcases}
where {$\eta(\balpha, \btheta) = \balpha^\top \bm{b}$ and $\bm{\xi}(\balpha, \btheta) = (\balpha^\top \bm{\Sigma})^\top$}.
Then we have the following solution:
\begin{numcases}{}
\bar{V}(t,x)=x+\frac{1}{4\lambda} \bm{b}^{\top}\bm{\Sigma}^{-1}\bm{b}(T-t)\\
\bar{g}(t,x)=x+\frac{1}{2\lambda} \bm{b}^{\top}\bm{\Sigma}^{-1}\bm{b}(T-t).
\end{numcases}
We choose $\hat{\btheta}$ such that $\bar{V}(t, x)$ is minimized, i.e., $$\hat{\btheta}=\arg \min\limits_{\btheta\in \Theta} \bm{b}^{\top}\bm{\Sigma}^{-1}\bm{b},$$ then we have  $V(t,x)=x+\frac{1}{4\lambda}\hat{\bm{b}}^{\top}\hat{\bm{\Sigma}}^{-1}\hat{\bm{b}}(T-t)$, $g(t,x)=x+\frac{1}{2\lambda}\hat{\bm{b}}^{\top}\hat{\bm{\Sigma}}^{-1}\hat{\bm{b}}(T-t)$, and the optimal strategy $\hat{\bm{\alpha}}=\frac{1}{2\lambda}\hat{\bm{\Sigma}}^{-1}\hat{\bm{b}}$.

In the following, we verify that $(V(t, x), g(t, x))$ is {the solution to \eqref{equ ori PDE}}.
 Denote
\begin{align}	
	\mathcal{F}^{\balpha,\btheta}(V,g)=&\mathcal{L}^{\balpha,\btheta}V(t,x)-\lambda \mathcal{H}^{\balpha,\btheta}  g(t,x) \notag \\
	=&V_{t}+\balpha^{\top} \bm{b}V_{x}+\frac{1}{2}\balpha^{\top}\bm{\Sigma} \balpha V_{xx}-\lambda\balpha^{\top}\bm{\Sigma}\balpha g_{x}^{2}. \notag
\end{align}	
\textcircled{1} We show that 
\begin{align}\label{equ max alpha theta1}
	\mathcal{F}^{\balpha,\hat{\btheta}}(V,g)\leq \mathcal{F}^{\hat{\balpha},\hat{\btheta}}(V,g) = 0, \quad \forall \balpha \in \R^{n}	`.
\end{align}
To show \eqref{equ max alpha theta1}, we first have
\begin{align*}
	&\mathcal{F}^{\balpha,\hat{\btheta}}(V,g)=-\frac{1}{4\lambda}\hat{\bm{b}}^{\top}\hat{\bm{\Sigma}}^{-1}\hat{\bm{b}}+\balpha^{\top}\hat{\bm{b}}-\lambda \balpha^{\top} \hat{\bm{\Sigma}}\balpha,
\end{align*}
which is a quadratic function of $\balpha$. Then it is easy to verify \eqref{equ max alpha theta1}.
Consequently, 
\begin{align}
	\sup\limits_{\balpha \in \R^{n}} \inf\limits_{\btheta\in \Theta} \mathcal{F}^{\balpha,\btheta}(V,g)\leq 0.
\end{align}
\textcircled{2} We show that 
\begin{align}\label{equ max alpha1 theta}
	\inf_{\btheta\in \Theta}\mathcal{F}^{\hat{\balpha},\btheta}(V,g)=\mathcal{F}^{\hat{\balpha},\hat{\btheta}}(V,g)=0.
\end{align}
Denote  $H(\bm{b},\bm{\Sigma})=\bm{b}^{\top}\hat{\bm{\Sigma}}^{-1}\bm{\Sigma}\hat{\bm{\Sigma}}^{-1}\hat{\bm{b}}$, we have 
\begin{align}
	&\mathcal{F}^{\hat{\balpha},\btheta}(V,g)\notag\\
	=&-\frac{1}{4\lambda}\hat{\bm{b}}^{\top}\hat{\bm{\Sigma}}^{-1}\hat{\bm{b}}+\frac{1}{2\lambda}\hat{\bm{b}}^{\top}\hat{\bm{\Sigma}}^{-1}\bm{b}-\lambda\frac{1}{4\lambda^{2}}\hat{\bm{b}}^{\top}\hat{\bm{\Sigma}}^{-1}\bm{\Sigma}\hat{\bm{\Sigma}}^{-1}\hat{\bm{b}} \notag\\
	=&-\frac{1}{4\lambda}\hat{\bm{b}}^{\top}\hat{\bm{\Sigma}}^{-1}\hat{\bm{b}}+\frac{1}{2\lambda}H(\bm{b},\hat{\bm{\Sigma}})-\frac{1}{4\lambda}H(\hat{\bm{b}},\bm{\Sigma})\notag\\
	\geq&-\frac{1}{4\lambda}\hat{\bm{b}}^{\top}\hat{\bm{\Sigma}}^{-1}\hat{\bm{b}}+\frac{1}{4\lambda}H(\hat{\bm{b}},\hat{\bm{\Sigma}})=0.\label{equ ineq pham}
\end{align}
The last inequality \eqref{equ ineq pham} comes from Pham, Wei and Zhou (2022) \footnote{{Pham, Wei and Zhou (2022) focus on the ambiguity of correlation, i.e., volatility is known in their model, but their proof for inequality \eqref{equ zhou 2022} holds true for ambiguous covariance.   }}{, which is}
\begin{align}\label{equ zhou 2022}
H(\hat{\bm{b}},\hat{\bm{\Sigma}})-2H(\bm{b},\hat{\bm{\Sigma}})+H(\hat{\bm{b}},\bm{\Sigma}) \leq 0, \quad \forall (\bm{b}, \bm{\Sigma})\in \Theta.
\end{align}
The equality is achieved if and only if $\bm{\Sigma}=\hat{\bm{\Sigma}}, \bm{b}=\hat{\bm{b}}$. 

\textcircled{3}According to \eqref{equ max alpha theta1} and \eqref{equ max alpha1 theta}, we have for any $\balpha \in \R^{n}$, 
\begin{align}\label{equ veri final}
	\mathcal{F}^{\balpha,\hat{\btheta}}(V,g)\leq \mathcal{F}^{\hat{\balpha},\hat{\btheta}}(V,g)  =\inf_{\btheta\in \Theta}\mathcal{F}^{\hat{\balpha},\btheta} (V,g)= 0.
\end{align}
Therefore $(V, g)$ is a solution of the PDE system \eqref{equ ori PDE}.

\subsubsection{Mean-variance w.r.t. log return}
The proof is analogous to Section \ref{sec basic wealth}, we first fix $\btheta= (\bm{b}, \bm{\Sigma})$ and solve the following PDE system:
	\begin{numcases}{}
		\sup_{\balpha \in \R^{n}}\{\mathcal{L}^{\balpha,\btheta}\bar{V}(t,x)-\lambda \mathcal{H}^{\balpha,\btheta}  \bar{g}(t,x) \}=\mathcal{L}^{\hat{\balpha},\btheta} \bar{V}(t,x)-\lambda \mathcal{H}^{\hat{\balpha},\btheta}   \bar{g}(t,x)=0\notag\\
		\mathcal{L}^{\bar{\balpha},\btheta}\bar{g}(t,x)=0\notag\\
		\bar{V}(T,x)=x\notag\\
		\bar{g}(T,x)=x,\notag
	\end{numcases}
with {${\eta}(\balpha, \btheta) =\balpha^\top \bm{b} - \frac{1}{2} \balpha^\top \bm{\Sigma} \balpha$ and $\bm{\xi}(\balpha, \btheta) = (\balpha^\top \bm{\Sigma})^\top$}.
We have the following solution:
\begin{numcases}{}
	\bar{V}(t,x)=x+\frac{1}{2(1+2\lambda)}\bm{b}^{{\top}}\bm{\Sigma}^{{-1}}\bm{b}(T-t)\\
	\bar{g}(t,x)=x+{\frac{2\lambda+\frac{1}{2}}{(1+2\lambda)^{2}}}\bm{b}^{{\top}}\bm{\Sigma}^{{-1}}\bm{b}(T-t)
\end{numcases}
To minimize $\bar{V}$, we still choose $\hat{\btheta}=\arg \min\limits_{\btheta\in \Theta} \bm{b}^{\top}\bm{\Sigma}^{-1}\bm{b}$, then we have  $V(t,x)=x+\frac{1}{2(1+2\lambda)}\hat{\bm{b}}^{\top}\hat{\bm{\Sigma}}^{-1}\hat{\bm{b}}(T-t)$, $g(t,x)=x+{\frac{2\lambda+\frac{1}{2}}{(1+2\lambda)^{2}}}\hat{\bm{b}}^{\top}\hat{\bm{\Sigma}}^{-1}\hat{\bm{b}}(T-t)$, and the corresponding optimal strategy is $\hat{\balpha}=\frac{1}{1+2\lambda}\hat{\bm{\Sigma}}^{{-1}}\hat{\bm{b}}$.\\
Denote
	\begin{align}	
		\mathcal{F}^{\balpha,\btheta}(V,g)=&\mathcal{L}^{\balpha,\btheta}V(t,x)-\lambda \mathcal{H}^{\balpha,\btheta}  g(t,x) \\
		=&V_{t}+(\balpha^{\top}\bm{b}-\frac{1}{2}\balpha^{\top}\bm{\Sigma}\balpha) V_{x}+\frac{1}{2}\balpha^{\top}\bm{\Sigma} \balpha V_{xx}-\lambda\balpha^{\top}\bm{\Sigma}\balpha g_{x}^{2}.
	\end{align}	
Then we can analogously show \eqref{equ max alpha theta1} and \eqref{equ max alpha1 theta}, which implies \eqref{equ veri final}. {That implies $(V, g)$ is a solution to \eqref{equ ori PDE}}.

\subsubsection{Minimal risk premium and worst case scenario}
	Let
	\begin{align}
		f=\frac{1}{\sigma_1^{2}\sigma_2^{2}(1-\rho^2)}\bigg(\sigma_2^{2}b_1^2+\sigma_1^{2}b_2^2 -2 \sigma_1\sigma_2\rho b_1 b_2\bigg)
	\end{align}For fixed $\rho$, we consider two cases: $\rho \leq 0$ and $\rho > 0$.\\
	(1) If $\rho \leq 0$, since the first order condition is:
	\begin{align}
		&\frac{\partial f}{\partial b_1}=\frac{2}{\sigma_1^{2}(1-\rho^{2})}b_1-\frac{2}{\sigma_1\sigma_2(1-\rho^{2})}\rho b_2=\frac{2}{\sigma_1(1-\rho^{2})}(\frac{b_1}{\sigma_1}-\rho \frac{b_2}{\sigma_2}),\label{b1}\\
		&\frac{\partial f}{\partial b_2}=\frac{2}{\sigma_2^{2}(1-\rho^{2})}b_2-\frac{2}{\sigma_1\sigma_2(1-\rho^{2})}\rho b_1=\frac{2}{\sigma_2(1-\rho^{2})}(\frac{b_2}{\sigma_2}-\rho \frac{b_1}{\sigma_1}),\label{b2}\\
		&\frac{\partial f}{\partial \sigma_1}=\frac{2b_1 (\rho b_2 \sigma_1-b_1 \sigma_2)}{\sigma_1^{3}\sigma_2 (1-\rho^{2})},\label{sigma1}\\
		&\frac{\partial f}{\partial \sigma_2}=\frac{2b_2 (\rho b_1 \sigma_2-b_2 \sigma_1)}{\sigma_2^{3}\sigma_1 (1-\rho^{2})},\label{sigma2}\\
		&\frac{\partial f}{\partial \rho}=(-\frac{( b_1/\sigma_1) (b_2/\sigma_2)}{(b_1/\sigma_1)^{2}+(b_2/\sigma_2)^{2}}+\frac{\rho}{1+\rho^{2}})(1+\rho^{2})((b_1/\sigma_1)^{2}+(b_2/\sigma_2)^{2})/(1-\rho^{2})^2.\label{rho}
	\end{align}
	Notice when $\rho {\leq} 0$, $\frac{\partial f}{\partial b_1}>0$ and $\frac{\partial f}{\partial b_2}>0$, $\frac{\partial f}{\partial \sigma_1}<0$ and $\frac{\partial f}{\partial \sigma_2}<0$, then $\hat{b}_1=\underline{b}_1$, $\hat{b}_2=\underline{b}_2$, $\hat{\sigma}_1=\overline{\sigma}_1$ and $\hat{\sigma}_2=\overline{\sigma}_2$. Then we have
	
	\begin{align*}
		\frac{\partial f}{\partial \rho}=(-\frac{( \underline{b}_1/\overline{\sigma}_1) (\underline{b}_2/\overline{\sigma}_2)}{(\underline{b}_1/\overline{\sigma}_1)^{2}+(\underline{b}_2/\overline{\sigma}_2)^{2}}+\frac{\rho}{1+\rho^{2}})(1+\rho^{2})((\underline{b}_1/\overline{\sigma}_1)^{2}+(\underline{b}_2/\overline{\sigma}_2)^{2})/(1-\rho^{2})^2<0
	\end{align*}
	
	Then $\hat{\rho}=\min\{0,\overline{\rho}\}$.\\
	(2) If $\rho>0$, 
	
	{Case 1:  $\hat{b}_2/\hat{\sigma}_2\leq\hat{b}_1/\hat{\sigma}_1$.}
	
	Obviously, {according to the first order conditions,} in this case we have $\hat{b}_1=\underline{b}_1$ and $\hat{\sigma}_1=\overline{\sigma}_1$. By the first order condition, we have
	\begin{align}
		&\hat{b}_2=\overline{b}_2\quad \text{and} \quad \hat{\sigma}_2=\underline{\sigma}_2, \quad \text{if} \quad \overline{b}_2/\underline{\sigma}_2<\rho \underline{b}_1/\overline{\sigma}_1,\label{1}\\ 
		&{\frac{\hat{b}_2}{\hat{\sigma}_2} = \hat{\rho} \frac{\underline{b}_1}{\bar{\sigma}_1}}, \quad \text{if} \quad \underline{b}_2/\overline{\sigma}_2<\rho \underline{b}_1/\overline{\sigma}_1<\overline{b}_2/\underline{\sigma}_2,\label{2}\\
		&\hat{b}_2=\underline{b}_2\quad \text{and} \quad \hat{\sigma}_2=\overline{\sigma}_2, \quad \text{if} \quad \underline{b}_2/\overline{\sigma}_2>\rho \underline{b}_1/\overline{\sigma}_1.\label{3}
	\end{align}
	In case (\ref{1}), $1>\rho>\frac{\overline{b}_2/\underline{\sigma}_2}{\underline{b}_1/\overline{\sigma}_1}$. By (\ref{rho}), we have $\hat{\rho}=\max \{\underline{\rho},\frac{\overline{b}_2/\underline{\sigma}_2}{\underline{b}_1/\overline{\sigma}_1}\}$.\\
	In case (\ref{2}), {$\hat{b}_2, \hat{\sigma}_2, \hat{\rho}$ is chosen to satisfy (\ref{2})}. Notice, in this case, $f=\underline{b}_1^{2}/\overline{\sigma}_1^{2}$.\\
	In case (\ref{3}), $\rho<\frac{\underline{b}_2/\overline{\sigma}_2}{\underline{b}_1/\overline{\sigma}_1}$. By (\ref{rho}), we obtain $\hat{\rho}=\min \{\overline{\rho},\frac{\underline{b}_2/\overline{\sigma}_2}{\underline{b}_1/\overline{\sigma}_1}\}$.

	{Case 2: $\hat{b}_2/\hat{\sigma}_2>\hat{b}_1/\hat{\sigma}_1$}
	
{	Analogous to Case 1, we have $\hat{b}_2=\underline{b}_2$ and $\hat{\sigma}_2=\overline{\sigma}_2$ from the first order conditions. Then 
	\begin{align}
		&\hat{b}_1=\overline{b}_1\quad \text{and} \quad \hat{\sigma}_1=\underline{\sigma}_1, \quad \text{if} \quad \overline{b}_1/\underline{\sigma}_1<\rho \underline{b}_2/\overline{\sigma}_2,\label{equ worst case 221}\\ 
		&{\frac{\hat{b}_1}{\hat{\sigma}_1} = \hat{\rho} \frac{\underline{b}_2}{\bar{\sigma}_2}}, \quad \text{if} \quad \underline{b}_1/\overline{\sigma}_1<\rho \underline{b}_2/\overline{\sigma}_2<\overline{b}_1/\underline{\sigma}_1,\label{equ worst case 222}\\
		&\hat{b}_1=\underline{b}_1\quad \text{and} \quad \hat{\sigma}_1=\overline{\sigma}_1, \quad \text{if} \quad \underline{b}_1/\overline{\sigma}_1>\rho \underline{b}_2/\overline{\sigma}_2.\label{equ worst case 223}
	\end{align}
  Notice that \eqref{equ worst case 221} and \eqref{equ worst case 222} contradict the Assumption \eqref{assump b sigma}, while \eqref{equ worst case 223} contradicts  the assumption $\hat{b}_2/\hat{\sigma}_2>\hat{b}_1/\hat{\sigma}_1$. Therefore, this Case 2 never happens.
}	
	
	Combining {the above results}, we obtain Proposition 2.

\subsection{Explicit solutions in Section \ref{sec expli exten}}
\subsubsection{Jump with compounded Possion process and constant $\lambda$}
Similar as Section \ref{sec basic wealth},
we first fix $\btheta = (\bm{b}, \bm{\Sigma}, \bm{\mu}, \bm{J}, \bm{\Phi})$, and solve the following PDE system. 
\begin{align*}
	\left\{\begin{matrix}
		\sup\limits_{\balpha\in \R^{n}}\{ \mathcal{A}^{\balpha,\btheta}\bar{f}^{x}(t,x)+2\lambda \bar{g} \mathcal{A}^{\balpha,\btheta}\bar{g}(t,x)\}=0\\
		\mathcal{A}^{\hat{\balpha}, \hat{\btheta}}\bar{f}^x(t,x)=0\\
		\mathcal{A}^{\hat{\balpha}, \hat{\btheta}}\bar{g}(t,x)=0\\
		\bar{f}^x(T,x)=x-\lambda x^{2}\\
		\bar{g}(T,x)=x,
	\end{matrix}\right.
\end{align*}
where 
 $\mathcal{A}^{\balpha, \btheta}\psi(t,x)=\psi_{t}(t,x)+\balpha^{\top}\bm{b^{\theta}}\psi_{x}(t,x)+\frac{1}{2}\balpha^{\top}\bm{\sigma} \balpha\psi_{xx}(t,x)+\sum_{l=1}^{k}\mu_{l} \mathbb{E}[\psi(t,x+\balpha^{\top}\bm{J}_{l} Y_{l})-\psi(t,x)]$.

We have a solution $(\bar{f}^x, \bar{g})$ as follows: 
\begin{numcases}{}
\bar{f}^x(t,x)=-\lambda x^{2}+\bigg(1-{\bm{b}}_{\bm{F}}^\top{\bm{\Sigma}}_{\bm{F}}^{-1}{\bm{b}}_{\bm{F}}(T-t)\bigg)x +\frac{1}{4\lambda}{\bm{b}}_{\bm{F}}^\top{\bm{\Sigma}}_{\bm{F}}^{-1}{\bm{b}}_{\bm{F}}(T-t)\bigg(1-{\bm{b}}_{\bm{F}}^\top{\bm{\Sigma}}_{\bm{F}}^{-1}{\bm{b}}_{\bm{F}}(T-t)\bigg)\notag\\
    \bar{g}(t,x)=x+\frac{1}{2\lambda}{\bm{b}}_{\bm{F}}^\top{\bm{\Sigma}}_{\bm{F}}^{-1}{\bm{b}}_{\bm{F}}(T-t).\notag
\end{numcases}
Assume $\hat{\btheta}=(\bm{\hat{b}_{F}},\hat{\bm{\Sigma}}_{\F})=\arg\inf\limits_{\btheta \in \Theta}{\bm{b}}_{\bm{F}}^\top{\bm{\Sigma}}_{\bm{F}}^{-1}{\bm{b}}_{\bm{F}}$, then we have
\begin{numcases}{}
 f^x(t,x)=-\lambda x^{2}+\bigg(1-\hat{\bm{b}}_{\bm{F}}^\top\hat{\bm{\Sigma}}_{\bm{F}}^{-1}\hat{\bm{b}}_{\bm{F}}(T-t)\bigg)x +\frac{1}{4\lambda}\hat{\bm{b}}_{\bm{F}}^\top\hat{\bm{\Sigma}}_{\bm{F}}^{-1}\hat{\bm{b}}_{\bm{F}}(T-t)\bigg(1-\hat{\bm{b}}_{\bm{F}}^\top\hat{\bm{\Sigma}}_{\bm{F}}^{-1}\hat{\bm{b}}_{\bm{F}}(T-t)\bigg)\notag\\ g(t,x)=x+\frac{1}{2\lambda}\hat{\bm{b}}_{\bm{F}}^\top\hat{\bm{\Sigma}}_{\bm{F}}^{-1}\hat{\bm{b}}_{\bm{F}}(T-t). \notag
 \end{numcases}
 Similar as Section \ref{sec basic wealth}, we can verify
\begin{align}
	\mathcal{F}^{\balpha,\hat{\btheta}}(V,g)\leq \mathcal{F}^{\hat{\balpha},\hat{\btheta}}(V,g)  =\inf_{\btheta\in \Theta}\mathcal{F}^{\hat{\balpha},\btheta} (V,g)= 0, \quad {\forall \balpha\in \R^{n}},
\end{align}
where {$V(t, x) = f^x(t,x)+\lambda(x)g^{2}(t,x)$},
That implies $(f^x, g)$ is a solution. 

\subsubsection{Jump and $\lambda(x) = \frac{\lambda}{x}$}
First,	we fix $\btheta = (\bm{b}, \bm{\Sigma}, \bm{F})$ and find a solution of the PDE system:	
	\begin{align}\label{equ3 ori PDE}
		\left\{\begin{matrix}
			\sup\limits_{\balpha\in \R^{n}}\{ \mathcal{A}^{\balpha,\btheta}\bar{f}^{x}(t,x)+2\frac{\lambda}{x}\bar{g} \mathcal{A}^{\balpha,\btheta}\bar{g}(t,x)\}=0\\
			\mathcal{A}^{\hat{\balpha}, \btheta}\bar{f}^x(t,x)=0\\
			\mathcal{A}^{\hat{\balpha},\btheta}\bar{g}(t,x)=0\\
			\bar{f}^x(T,x)=x-{\lambda}x\\
			\bar{g}(T,x)=x,
		\end{matrix}\right.
	\end{align}
	Similary, according to Bjork (2014), we have the following result that for all $\btheta$, 
	\begin{numcases}{}
		\hat{\balpha}=\frac{1}{2\lambda}\bm{\Sigma}_{\F}^{{-1}}\bm{b}_{\F}\frac{\bar{A}(t)+2\lambda [\bar{A}^{2}(t)-\bar{B}(t)]}{\bar{B}(t)}x\notag\\
		\bar{f}^x(t,x)=\bar{A}(t)x-{\lambda}\bar{B}(t)x\notag\\
		\bar{g}(t,x)=\bar{A}(t)x \notag
	\end{numcases}
	where $\bar{A}$ and $\bar{B}$ solve the ODE system
	\begin{numcases}{}
		\bar{A}_{t}+(\frac{\bar{A}+2\lambda[\bar{A}^{2}-\bar{B}]}{2\lambda \bar{B}}{\bm{b}}_{\bm{F}}^\top{\bm{\Sigma}}_{\bm{F}}^{-1}{\bm{b}}_{\bm{F}})\bar{A}=0\label{ode1}\label{equ AB ODE system1}\\
		\bar{A}(T)=1\\
		\bar{B}_{t}+\{2(\frac{\bar{A}+2\lambda[\bar{A}^{2}-\bar{B}]}{2\lambda \bar{B}}{\bm{b}}_{\bm{F}}^\top{\bm{\Sigma}}_{\bm{F}}^{-1}{\bm{b}}_{\bm{F}})+\frac{(\bar{A}+2\lambda[\bar{A}^{2}-\bar{B}])^{2}}{4\lambda^{2} \bar{B}^{2}}{\bm{b}}_{\bm{F}}^\top{\bm{\Sigma}}_{\bm{F}}^{-1}{\bm{b}}_{\bm{F}}\}\bar{B}=0\label{ode2}\\
		\bar{B}(T)=1\label{equ AB ODE system4}.
	\end{numcases}
	Then we guess ${\hat{\btheta}=(\hat{\bm{b}},\hat{\bm{\Sigma}}, \hat{\bm{F}})}=\arg\inf\limits_{\btheta \in \Theta}\bm{b}^{{\top}}_{\F}\bm{\Sigma}^{{-1}}_{\F}\bm{b}_{\F}$, i.e.,
	\begin{numcases}{}
		\hat{\balpha}=\frac{1}{2\lambda}\hat{\bm{\Sigma}}_{\F}^{-1}\hat{\bm{b}}_{\F}\frac{A(t)+2\lambda [A^{2}(t)-B(t)]}{B(t)}x\notag\\
		f^x(t,x)=A(t)x-\frac{\lambda}{y}B(t)x^{2}\notag\\
		g(t,x)=A(t)x\notag\\
		\hat{\btheta}=\arg \inf_{\btheta\in \Theta} {\bm{b}}_{\bm{F}}^\top{\bm{\Sigma}}_{\bm{F}}^{-1}{\bm{b}}_{\bm{F}},\notag
	\end{numcases}
{where $(A(t), B(t))$ is the solution $(\bar{A}(t), \bar{B}(t))$ to \eqref{equ AB ODE system1}-\eqref{equ AB ODE system4} with $\btheta = \hat{\btheta}$.}
	
We next show that our guess is indeed a solution to \eqref{equ2 ori PDE}. 
{Similar as Section \ref{sec basic wealth}}, if we can show that $\sup \limits_{\balpha\in \R^{n}}{\mathcal{F}}^{\balpha, \hat{\btheta}}(V,g)=0$ and $\inf\limits_{\btheta\in \Theta}{\mathcal{F}}^{\hat{\balpha}, \btheta}(V,g) =0$ {for $V(t,x) = f^x(t, x)+ \frac{\lambda}{x} g^2(t, x)$}, {then we have verified our guess}.\\
\textcircled{1}{We first show} $\sup \limits_{\balpha\in \R^{n}}{\mathcal{F}}^{\balpha, \hat{\btheta}}(V,g)=0$.
	
We have
	\begin{align*}
		{\mathcal{F}}^{\balpha,\hat{\btheta}}(V,g)=A_{t}x-\lambda B_{t}x+2\lambda AA_{t}x+\balpha^{\top}\hat{\bm{b}}_{\F}(A-2\lambda B+2\lambda A^{2})-\frac{1}{2}\balpha^{\top}\hat{\bm{\Sigma}}_{\F}\balpha(\frac{2\lambda}{x}B).
	\end{align*}
	Since ${\mathcal{F}}^{\balpha,\hat{\btheta}}(V,g)$ is a quadratic function of $\balpha$, it is easy to verify \textcircled{1}.
	Consequently, 
	\begin{align}
		\sup_{\balpha\in \R^{n}} \inf\limits_{\btheta\in \Theta} \mathcal{F}^{\balpha,\btheta}(V,g)\leq \sup \limits_{\balpha\in \R^{n}}{\mathcal{F}}^{\balpha, \hat{\btheta}}(V,g)\leq0.
	\end{align}
	
\textcircled{2} {We next show} $\inf\limits_{\btheta\in \Theta}{\mathcal{F}}^{\hat{\balpha}, \btheta}(V, g)=0$.
	\begin{align*}
		&{\mathcal{F}}^{\hat{\balpha},\btheta}(V,g)=A_{t}x-\lambda B_{t}x+2\lambda AA_{t}x+\hat{\balpha}^{{\top}}\bm{b}_{\F}(A-2\lambda B+2\lambda A^{2})-\frac{1}{2}\hat{\balpha}^{{\top}}\bm{\Sigma}_{\F}\hat{\balpha}(\frac{2\lambda}{x}B)\\
		&=A_{t}x-\lambda B_{t}x+2\lambda AA_{t}x+\frac{(A-2\lambda B+2\lambda A^{2})^{2}x}{2\lambda B}H(\bm{b}_{\F},\hat{\bm{\Sigma}}_{\F})-\frac{1}{2}\frac{(A-2\lambda B+2\lambda A^{2})^{2}x}{2\lambda B}H(\hat{\bm{b}}_{\F},\bm{\Sigma}_{\F})\\
		&\geq A_{t}x-\lambda B_{t}x+2\lambda AA_{t}x+\frac{(A-2\lambda B+2\lambda A^{2})^{2}x}{4\lambda B}\hat{\bm{b}}^{\top}_{\F}\hat{\bm{\Sigma}}_{\F}^{-1}\hat{\bm{b}}_{\F}
	\end{align*}
	{where the last inequality is from 
	\begin{align}
		H(\hat{\bm{b}}_{\F},\bm{\Sigma}_{\F}) \leq H(\hat{\bm{b}}_{\F},\hat{\bm{\Sigma}}_{\F}) =\hat{\bm{b}}^{\top}_{\F}\hat{\bm{\Sigma}}_{\F}^{-1}\hat{\bm{b}}_{\F} \leq H(\bm{b}_{\F},\hat{\bm{\Sigma}}_{\F})
	\end{align}
    as \eqref{equ zhou 2022}.
    }
	Let $\eqref{ode1}\times(x+2\lambda Ax)-\eqref{ode2}\times\lambda x$, we have
	\begin{align*}
		&\eqref{ode1}\times(x+2\lambda Ax)-\eqref{ode2}\times\lambda x\\
		&=A_{t}x-\lambda B_{t}x+2\lambda AA_{t}x+\frac{(A-2\lambda B+2\lambda A^{2})^{2}x}{4\lambda B}\hat{\bm{b}}^{\top}_{\F}\hat{\bm{\Sigma}}_{\F}^{-1}\hat{\bm{b}}_{\F}=0
	\end{align*}
	And the equality is achieved if and only if when $\bm{\Sigma}_{\F}=\hat{\bm{\Sigma}}_{\F}, \bm{b}_{\F}=\hat{\bm{b}}_{\F}$.

\end{document}